\theoremstyle{plain} 
\newtheorem{prop}{Proposition}
\newtheorem{theorem}{Theorem}
\newtheorem{lemma}{Lemma}
\theoremstyle{definition} 
\newtheorem{ex}{Example}
\newtheorem{rmk}{Remark}
\theoremstyle{remark}
\DeclareMathOperator{\E}{\mathds{E}}
\renewcommand{\P}{\mathds{P}}
\newcommand{\R}{\mathds{R}}
\newcommand{\supp}{\text{supp}}
\renewcommand{\1}{\mathds{1}}
\newcommand\eqid{\stackrel{d}{=}}
\renewcommand*\d{\mathop{}\!\mathrm{d}}
\newcommand{\argmax}{\operatornamewithlimits{argmax}}
\theoremstyle{definition}
\crefname{manualasm}{assumption}{assumptions}
\crefname{claim}{claim}{claims}
\crefname{ex}{example}{examples}
\crefname{defn}{definition}{definitions}
\crefname{rmk}{remark}{remarks}
\title{Dynamic Threats to Credible Auctions\thanks{We thank Gregorio Curello,
Matteo Escudé, Alkis Georgiadis-Harris, Mathijs Janssen, Shengwu Li, Ludvig Sinander, as well as several seminar and conference audiences at Northwestern, the University of Chicago, the University of Rochester, Bocconi University, EC'25, the Capri in Theory Workshop, the SITE Conference, and the Boston Theory Mini-Conference for helpful comments and discussions.}}
\author{Martino Banchio\thanks{Universit\`a Bocconi, IGIER, and Google Research. \href{mailto:martino.banchio@unibocconi.it}{martino.banchio@unibocconi.it}} \and Andrzej Skrzypacz\thanks{Stanford University, Graduate School of Business. \href{mailto:skrz@stanford.edu}{skrz@stanford.edu}} \and Frank Yang\thanks{Department of Economics, Harvard University. \href{mailto:fyang@fas.harvard.edu}{fyang@fas.harvard.edu}}}
\date{\today}
\begin{document}

\maketitle

\begin{abstract}
A seller wants to sell a good to a set of bidders using a credible mechanism. We show that when the seller has private information about her cost, it is impossible for a static mechanism to achieve the optimal revenue. In particular, even the optimal first-price auction is not credible. We show that the English auction can credibly implement the optimal mechanism, unlike the optimal Dutch auction. For symmetric mechanisms in which only winners pay, we also characterize all the static auctions that are credible: They are first-price auctions that depend only on the seller's cost ex post via a secret reserve, and may profitably pool bidders via a bid restriction. Our impossibility result highlights the role of public institutions and helps explain the use of dynamic mechanisms in informal auctions.\\

\noindent\textbf{Keywords:} Credibility, dynamic deviation, informed principal, mechanism design.

\end{abstract}
\setcounter{page}{1}
\newpage

\section{Introduction}

In this paper, we revisit the question of credible auction design and show that when the seller has private information about her cost, it is not possible to implement the optimal (i.e., profit-maximizing) mechanism using a static auction. As we show, even the sealed-bid first-price auction is not a credible mechanism. We show that optimality requires a dynamic mechanism and that the ascending auction can be used to credibly implement the optimal mechanism.

We are motivated to study credible auction design because many informal auctions are decentralized, following bilateral communications between the seller and potential buyers. 
In many informal sales, bidders are contacted with opportunities to submit bids for a good and do not have a chance to inspect the bids of others or what communication the seller had with other bidders.
For example, in bidding for houses, it is often hard for bidders to monitor or audit brokers running auctions regarding bids submitted by others or the communication the brokers have with others. Beyond informal auctions, our study also sheds light on many online auctions where sellers also communicate with bidders through private, bilateral communications.   

Our results yield two important takeaways.
First, we explain why many informal auction/sale processes are dynamic rather than static: It is impossible for sellers to credibly commit to optimal static mechanisms, and hence under credibility, dynamic mechanisms may outperform static ones. Similarly, this may explain why, in select online markets, sellers implement dynamic mechanisms (for example, eBay uses an auction format similar to the ascending auction). Second, we highlight the role of institutions, such as centralized clearinghouses or public announcements, in ensuring optimal market operations. Simple \emph{public} announcements about offer timing or reserve prices provide the seller with sufficient credibility to restore the optimality of static auctions.

Our notion of credibility follows the approach in \citet{akbarpour2020credible}. We say that an auction is credible if the only profitable deviations a seller could have are such that they would be observable by some bidder. For example, if a seller attempts to run a second-price auction, it would not be credible because after observing all the bids, the seller could charge the winner a price between the highest and second-highest bids (for example, via a practice known as soft reserve prices), and no individual bidder would be able to detect that deviation. 

The key new element of our model is that we allow the seller to have private information about her cost and not be able to commit to public messages. For example, the seller of a house may have her own private value of owning the house, a freight broker may have private information about the lowest price that she is willing to accept, and an online seller may have a private outside option of selling the item offline.

Our first result is that in this environment the optimal sealed-bid first-price auction is not credible, and in fact has a profitable safe deviation for the seller with probability $1$ (\Cref{thm:fpa}). It involves a dynamic deviation by the informed seller. The basic intuition behind why the optimal first-price auction is no longer credible can be understood through the following example (we will explain later why with probability $1$ the seller always wants to deviate):

\begin{ex}\label{ex:fpa}
Suppose that there are two bidders. Each bidder $i$ has an independent private value $\theta_i \in \{1, 2\}$ (with equal probabilities). The seller has an independent private cost $\theta_0 \in \{0, 0.7\}$. The optimal mechanism can be implemented via a first-price auction as follows. If the seller has cost $0$, then she tells each bidder the Myersonian reserve $r(0) = 1$, and solicits a bid from $\{1, \frac{5}{3}\}$; if the seller has cost $0.7$, then she tells each bidder $i$ the Myersonian reserve $r(0.7) = 2$, and solicits a bid from $\{0, 2\}$. A bidder of a higher type always selects the higher bid of the two. However, we claim that this mechanism cannot be credible. The deviation works as follows. Consider the seller of cost $0$. Let the seller follow the mechanism for bidder $1$ and tell bidder $1$ that the reserve is $r(0) = 1$. Suppose that the seller gets a high bid $\frac{5}{3}$ from bidder $1$. Now, if the seller continues following the mechanism, then regardless of the bid from bidder $2$, the seller's payoff will be $\frac{5}{3}$. However, if the seller deviates by pretending to have cost $\hat{\theta}_0 = 0.7$ to bidder $2$, then the seller gets a payoff 
\[\frac{1}{2} \times \frac{5}{3} + \frac{1}{2} \times 2 > \frac{5}{3}\,,\]
since with probability $\frac{1}{2}$, bidder $2$ bids $0$ in which case the seller still has the bid $\frac{5}{3}$ from bidder $1$, and with probability $\frac{1}{2}$, bidder $2$ bids $2$, in which case the seller makes a strict improvement. \hfill \qed
\end{ex}

Of course, in equilibrium, the deviation in \Cref{ex:fpa} will not increase the auctioneer's profit because bidder $1$ will anticipate the auctioneer using his bid as a reserve for bidder $2$ and so reduce his bid to begin with, eventually leading to a lower profit for the auctioneer due to her inability to commit. Our result differs from that in \citet{akbarpour2020credible} because they assume that the optimal reserve price is commonly known by all bidders, and hence our deviation would be detectable by all bidders. In contrast, in our model, the seller has private information about her cost, so such deviations are not detectable.\footnote{In this example, because types are discrete, a first-price auction with a reserve price of $2$ for both $\theta_0$ types yields the same expected payoff as the auction we described. That first-price auction is credible precisely because the reserve price is cost-independent. When the distribution of types is sufficiently rich, the optimal auctions always require cost-dependent reserve prices and that leads to our main impossibility result for first-price auctions.}

The first-price auction is an example of a static mechanism. However, our main result shows that the non-credibility of first-price auctions is not a coincidence: No static optimal mechanism can be credible (\Cref{thm:static}). The simplistic intuition can be understood from \citet{akbarpour2020credible} who show that optimality plus being static, credible, and winner-paying leaves only one candidate, the first-price auction. Applying their result to each auctioneer type, we would conclude that the auctioneer must use an optimal first-price auction type-by-type, but we have just shown that the first-price auction cannot be credible---hence, an impossibility. While logically tempting, this intuition is flawed because the privately informed auctioneer may design mechanisms that would not fully reveal her types but instead pool her types in arbitrary ways (recall the \textit{inscrutability principle} of \citealt{myerson1983mechanism}). When running the argument of \citet{akbarpour2020credible} type by type, the slice of the mechanism need \textit{not} even be incentive-compatible, let alone optimal.

Instead, our proof proceeds by first characterizing the space of all symmetric credible static mechanisms (\Cref{thm:char}) using a generalization of the dynamic deviations in the first-price auction case. We then show that \textit{(i)} it is without loss to focus on these mechanisms and \textit{(ii)} among these mechanisms, none of them can be optimal. Indeed, we show that restricting to symmetric winner-paying protocols, all credible static mechanisms share the following three properties. First, they are all first-price in the sense of \citet{akbarpour2020credible}: When a bidder submits a message to a mechanism, they must know what price they will pay if they win. Second, the message/bid space can be restricted in a way that is independent of the seller’s cost (this is a generalization of the ex-ante set price floor). Third, they allow for the seller to incorporate into the mechanism the realization of her cost only in a minimal way via a walk-away option: If the best offer is below the seller's cost, the seller would keep the object (for example, via a practice known as \textit{secret reserves}). Among such mechanisms, we show that the seller would always prefer to run an actual auction rather than use a posted price (\Cref{prop:auction}), but she can never achieve the optimal revenue with such a mechanism. Indeed, as we show, the seller's optimal bid space no longer has a standard interval structure (even under a regular distribution) but may involve substantial bid restrictions (\Cref{ex:gap}). The seller's credibility concern leads to a pooling of bidder types that would not appear in the full-commitment solution.

Next, we show that the English (ascending) auction can still be used to credibly implement the optimal mechanism (\Cref{thm:english}). This result does not require the English auction to be run in an open-cry manner but only through bilateral communications. The intuition for this result is as follows. Consider our deviation for the first-price auction.  Suppose that the seller knows that bidder $1$ has stayed in the English auction for a while with the clock rising to $b_1$ (higher than the seller's cost). Now, it might be tempting to conclude that the seller should treat $b_1$ as the new cost to set the reserve price for bidder $2$ (assuming that the seller has not called bidder $2$ yet). But that is not optimal. The optimal thing is to ask bidder $2$ if he is willing to beat $b_1$ and then run an auction between him and bidder $1$. That is, if the seller could not go back to bidder $1$ (like in the first-price auction), then she would inflate the winning bid. But here the seller always wants to go back, and even after such a deviation, she wants to continue as if it were an English auction.

These features of the English auction are not necessarily shared by other dynamic mechanisms, e.g., the Dutch (descending) auction. In \cite{akbarpour2020credible}, both the optimal English and Dutch auctions are credible. However, in our setting, the optimal Dutch auction may not be credible (\Cref{ex:dutch}). This is perhaps surprising because in a Dutch auction, our previous deviation for the first-price auction would not work because when bidder $1$ sends a message ``I will take it at the current price'', it is already too late for the seller to use the bid to update the reserve for bidder $2$. It turns out that there is a second deviation that the seller may use to profitably manipulate the reserves: Upon observing the bad news that the previous bidders declined to bid given the Myersonian reserve price, the seller reduces the reserve price for the last bidder. The key here is that the last bidder does not know he is the last one---otherwise, the Myersonian reserve price would continue to be optimal---and the seller claims to the last bidder that the competition is high but the bidder is lucky to face a low-cost seller. Note that this downward deviation of the reserve becomes profitable precisely when the seller cannot use the upward deviation of the reserve (i.e., no previous bidders bid above the original reserve)---these two dynamic deviations together also imply that the first-price auction has a profitable safe deviation for the seller with probability $1$ (\Cref{thm:fpa}). 

Our model assumes that the auctioneer communicates privately and bilaterally with each of the bidders, following \citet{akbarpour2020credible}.  As we discuss in \Cref{subsec:announcement}, another way to escape our impossibility result is to allow for public announcements, where the auctioneer publicly announces what the reserve price is after privately observing her cost. This public announcement would make the optimal first-price auction credible. It is a common institution in practice, perhaps precisely to avoid the credibility issue that we highlight. However, in many cases, public announcements may not be feasible, practical, or without costs. Moreover, as we discuss in \Cref{subsec:strong}, even with public announcements, the first-price auction may still suffer a credibility concern where the auctioneer deviates jointly with a bidder before the public announcement. We show that the ascending auction continues to be credible with respect to this type of deviation and satisfies what we call ``strong credibility'' (see \Cref{subsec:strong}).

\subsection{Related Literature}

We study credible auction design and uncover a series of dynamic deviations by an informed seller. Our deviations imply an impossibility result for static auctions to achieve both optimality and credibility.\footnote{Our characterization of symmetric credible static auctions (\Cref{thm:char}) also provides a rationale for the existence of secret reserves, which are common in practice (\citealt{bajari2003winner}). Unlike existing explanations (e.g., \citealt{vincent1995bidding}; \citealt{li2017hidden}), our model assumes risk-neutral bidders with independent private values---secret reserves arise due to the credibility concerns of an informed seller.} We show that the dynamic ascending auction can continue to credibly implement the optimal mechanism. Our model of credible auctions follows \citet{akbarpour2020credible}, who develop the notion of credibility using an extensive-form game framework.\footnote{The notion of credibility also appears in \citet{dequiedt2015vertical}, though they impose the restriction to revelation mechanisms (e.g., the auctioneer cannot communicate sequentially with
bidders).} In \citet{akbarpour2020credible}, the first-price auction, English auction, and Dutch auction are all credible. Among these auctions, we show that only the English auction can continue to credibly implement the optimal mechanism when the seller is privately informed. Unlike in \citet{akbarpour2020credible}, the deviations we discover involve the manipulation of the timing and reserve prices during the auction. Recently, \citet*{komo2024shill} build on the framework of \citet{akbarpour2020credible} and \citet{li2017obviously} to study shill-proofness and characterize the Dutch auction as the unique ``strongly shill-proof'' auction among the optimal auctions.

We model the privately informed auctioneer following the literature on informed principals (beginning with \citealt{myerson1983mechanism};  \citealt{maskin1990principal,maskin1992principal}). In our setting, if the auctioneer has full commitment power, then the privacy of her information about the cost is irrelevant (\citealt{Myerson_1985}; \citealt{yilankaya1999note}; \citealt{skreta2011informed}; \citealt{mylovanov2014mechanism}). However, we show that under credibility constraints, the seller's private information is no longer irrelevant because it interacts with the potential deviations during the process of the auction. \citet{giovannoni2022pricing} consider a bilateral trade setting where a seller with limited commitment learns about her cost after proposing the contract, and show that the seller can nevertheless obtain the full-commitment payoff. 

More broadly, this paper also connects to the literature on auctions and mechanism design with limited commitment (\citealt{bester2001contracting}; \citealt{Skreta2006}; \citealt*{Liu2019}; \citealt{banchio2021dynamic}). Unlike in the Coase-conjecture settings, we model the seller's limited commitment as deviations during the auction rather than after the auction. Thus, unlike \citet{doval2022mechanism}, we do not have a revelation principle and study instead various forms of dynamic deviations by an informed seller. 

The remainder of the paper proceeds as follows. \Cref{sec:model} presents our model. \Cref{sec:opt-static} shows the non-credibility of the first-price auction. \Cref{sec:Impossibility} presents our impossibility result for all static mechanisms. \Cref{sec:Dynamic} analyzes dynamic mechanisms. \Cref{sec:discussion} discusses public announcements and strong credibility. \Cref{sec:conclusion} concludes. 

\section{Model}\label{sec:model}
An auctioneer wants to sell one good to a set of bidders $N$ (with $|N| > 1$). Each bidder $i$ has a private value $\theta_i \in \Theta_i$, and the auctioneer has a private cost $\theta_0  \in \Theta_0$. We assume that $\Theta_0 = \Theta_i = [0, 1]$ for all $i \in N$. The distributions of the bidder values are symmetric and have full support, with a continuously differentiable CDF denoted by $F$. We assume that the value distribution is regular, i.e., $\theta - \frac{1 - F(\theta)}{f(\theta)}$ is strictly increasing. We also assume that the distribution of the seller's cost has a continuous, strictly positive density on $[0, 1]$. 

A \textit{\textbf{mechanism}} is a map that specifies the extensive-form game the auctioneer would run depending on her private information $\theta_0$: 
\[M: \Theta_0 \rightarrow \mathcal{G} \]
where $\mathcal{G}$ is the set of all finite-depth extensive-form games for the bidders.

Given a mechanism, let $\mathcal{I}_i(\theta_0)$ be the information sets available to bidder $i \in N$, when the auctioneer is of type $\theta_0$. Let 
\[\mathcal{I}_i := \bigcup_{\theta_0 \in \Theta_0} \mathcal{I}_i(\theta_0)\]
be the union of all information sets of bidder $i$ over the auctioneer's type $\theta_0$. 

An \textbf{\textit{interim strategy}} $\sigma_i: \mathcal{I}_i \rightarrow A$ for bidder $i$ specifies an action $a\in A$ for every bidder $i$'s information set $I_i$, satisfying $\sigma_i(I_i) \in A(I_i)$ where $A(I_i)$ is the set of actions available at $I_i$. An \textit{\textbf{ex ante strategy}} $S_i: \Theta_i \rightarrow \Sigma_i$ for bidder $i$ specifies an interim strategy $\sigma_i \in \Sigma_i$ for every type $\theta_i$. A \textit{\textbf{protocol}} $(M, S_N)$, where $S_N := (S_i(\,\cdot\,))_{i\in N}$, is a pair of a mechanism $M$ and a strategy profile $S_N$. A protocol $(M, S_N)$ is \textit{\textbf{Bayes incentive-compatible (BIC)}} if for all $\theta_i \in \Theta_i$
\[S_i(\theta_i) \in \argmax_{\sigma_i} \E_{\theta_{-i}, \theta_0} \Big[u^{M(\theta_0)}_i(\sigma_i, S_{-i}(\theta_{-i}), \theta_i)\Big]\,,\]
where we define 
\[u^G_i(\sigma_i, \sigma_{-i}, \theta_i) := u_i(x^G(\sigma_i, \sigma_{-i}), \theta_i)\,\]
and $u_i(x, \theta_i)$ denotes the utility of agent $i$ with type $\theta_i$ given outcome $x$, and $x^G(\sigma_i, \sigma_{-i})$ denotes the outcome in game $G$ when agents play according to $(\sigma_i, \sigma_{-i})$. A protocol $(M, S_N)$ satisfies \textit{\textbf{voluntary participation}} if for all $i$, there exists $\sigma'_i$ that ensures bidder $i$ does not receive the good and receives a zero net transfer, regardless of $\sigma'_{-i}$. Throughout the paper, we restrict attention to BIC protocols that satisfy voluntary participation. 

Given a protocol $(M, S_N)$, a \textit{\textbf{messaging game}} is generated as follows: The auctioneer can either choose an outcome or choose to send a message $I_i \in \mathcal{I}_i$ to an agent $i \in N$.  Agent $i$ privately observes message $I_i$ and chooses reply $r \in A(I_i)$. The auctioneer privately observes the reply $r$ and repeats. We say that the auctioneer \textit{\textbf{plays by the book}} if for every $\theta_0$, \textit{(i)} the auctioneer of type $\theta_0$ selects game $M(\theta_0)$, and then \textit{(ii)} the auctioneer contacts players according to the prescription given by $M(\theta_0)$.

Let $S_0: \Theta_0 \rightarrow \Sigma_0$ denote the auctioneer's ex-ante strategy and $\sigma_0 \in \Sigma_0$ denote the auctioneer's interim strategy, when playing the above messaging game. 

Given a promised strategy profile $(S_0, S_N)$, an auctioneer's deviation strategy $\hat{S}_0$ is \textit{\textbf{safe}} if, for all agents $i \in N$, and for all type profiles $(\theta_0, \theta_N)$, there exists a pair $(\hat{\theta}_0, \hat{\theta}_{-i})$ such that 
\[o_i\big(\hat{S}_0, S_N, \theta_0, \theta_N\big) = o_i\big(S_0, S_N, \hat{\theta}_0, (\theta_i, \hat{\theta}_{-i})\big)\,,\]
where $o_i$ denotes the \textit{\textbf{observation}} of bidder $i$ given a strategy profile and a type profile. In particular, we assume that each bidder observes whether he wins the object and his own
payment.

A safe deviation $\hat{S}_0$ is \textit{\textbf{profitable}} for auctioneer type $\theta_0$ if   
\[\E_{\theta_N}\Big[u_0(S_0(\theta_0), S_N, \theta_0, \theta_N)\Big] < \E_{\theta_N}\Big[u_0(\hat{S}_0(\theta_0), S_N, \theta_0, \theta_N)\Big]\,,\]
where $u_0$ denotes the auctioneer's utility function given a strategy profile and a type profile. We assume that the auctioneer maximizes the expected profit net of cost. A protocol $(M, S_N)$ is \textit{\textbf{credible}} if there is no profitable safe deviation: For any safe deviation $\hat{S}_0$, and any auctioneer type $\theta_0$, playing-by-the-book strategy $S_0$ yields a weakly higher payoff:  
\[\E_{\theta_N}\Big[u_0(S_0(\theta_0), S_N, \theta_0, \theta_N)\Big] \geq \E_{\theta_N}\Big[u_0(\hat{S}_0(\theta_0), S_N, \theta_0, \theta_N)\Big]\,.\]

An \textit{\textbf{outcome}} in our setting is a winner (if any) and a profile of payments $(y, t_N) \in (N \cup \{0\}) \times \R^{N}_{\geq 0}$. We assume that the transfers are nonnegative, i.e., the auctioneer does not pay the bidders. We allow for randomized mechanisms, in which case the outcome space is given by $\Delta\big((N \cup \{0\}) \times \R^{N}_{\geq 0}\big)$, and the realization is privately observed by the auctioneer.\footnote{A safe deviation is then defined as one where the auctioneer can provide an innocent explanation by misreporting jointly on $(\theta_0, \theta_{-i}, \varepsilon)$, where $\varepsilon$ is the realization of the randomization, to bidder $i$ that keeps bidder $i$'s observation the same. } Given a protocol $(M, S_N)$, we denote its induced allocation rule and transfer rule as $\big(\tilde{y}^{M, S_N}(\,\cdot\,), \tilde{t}^{M, S_N}(\,\cdot\,) \big): \Theta_0 \times \Theta_N \rightarrow \Delta\big((N \cup \{0\}) \times \R^{N}_{\geq 0}\big)$. We suppress the dependency on $(M, S_N)$ and the randomization whenever it is clear. 

Let 
\[\pi(M, S_N, \theta_0) = \E_{\theta_N}\Bigg[\sum_{i\in N} \tilde{t}_i^{M, S_N}(\theta_0, \theta_N) - \1_{\tilde{y}^{M, S_N}(\theta_0, \theta_N) \neq 0}\cdot \theta_0\Bigg]\]
denote the auctioneer's expected profit of protocol $(M, S_N)$. 

A protocol $(M, S_N)$ is \textit{\textbf{optimal}} if for any BIC protocol $(\widehat{M}, \widehat{S}_N)$ that satisfies voluntary participation, we have 
\[\E_{\theta_0}\Big[\pi(M, S_N, \theta_0)\Big] \geq \E_{\theta_0}\Big[\pi(\widehat{M}, \widehat{S}_N, \theta_0)\Big]\,.\]

A protocol $(M, S_N)$ is \textit{\textbf{static}} if, for each $\theta_0$, each bidder $i$ has exactly one information set and is called exactly once before any terminal history.

A protocol $(M, S_N)$ is a \textit{\textbf{first-price auction}} if $(M, S_N)$ is static, and for each $\theta_0$, each bidder $i$ chooses a bid $b_i$ from a set $B_i(\theta_0) \subset \R_{\geq 0}$ or declines to bid, such that:
\begin{itemize}
    \item[1.] Each bidder $i$ pays $b_i$ if he wins and $0$ if he loses. 
    \item[2.] If any bidder places a bid, then some maximal bidder wins the object. Otherwise, no bidder wins. 
\end{itemize}

\subsection{Model's Features}
  
\paragraph{Informed Principal.}\hspace{-2mm}In our setting, if the auctioneer were not to be constrained by credibility, then the privacy of her cost would be irrelevant (\citealt{Myerson_1985}; \citealt{yilankaya1999note}; \citealt{skreta2011informed}; \citealt{mylovanov2014mechanism}). Indeed, an optimal mechanism in our setting simply maps each cost type $\theta_0$ to a corresponding Myersonian optimal auction (e.g., a first-price auction with the cost-dependent Myersonian reserve). Similarly, if the auctioneer were not to have private information, then our model would collapse to the one studied by \citet{akbarpour2020credible}. In this sense, our model highlights the interaction between credibility and an informed principal. 

\paragraph{Credibility as Equilibrium.}\hspace{-2mm}A credible mechanism can be equivalently defined as a ``promise-keeping'' equilibrium. Fix any protocol $(M, S_N)$. Let $S_0$ be the auctioneer's play-by-the-book strategy given the protocol in the messaging game. Let $\text{Safe}(S_0, S_N)$ be the set of safe deviations in the ex-ante strategy space. Then, just as in \citet{akbarpour2020credible}, our definition of credibility has the property that $(M, S_N)$ is credible if and only if the messaging game has a Bayes-Nash equilibrium $(S_0, S_N)$ when the auctioneer is restricted to play in the set $\text{Safe}(S_0, S_N)$. In this messaging game, there is common knowledge about the menu of the possible extensive-form games 
\[\mathcal{M}:=\big\{M(\theta_0)\big\}_{\theta_0}\]
that the auctioneer promises to implement (depending on her type $\theta_0$). There is also common knowledge about the messaging game that the auctioneer would be playing with the bidders, as well as the common knowledge that the auctioneer can choose any strategy in the set $\text{Safe}(S_0, S_N)$. In particular, along the path of play, the type-$\theta_0$ auctioneer can deviate to another type $\theta'_0$ and then provide innocent explanations about other bidders' types $\theta'_{-i}$ to a given bidder $i$. In general, the safe deviations would involve combinations of such double deviations as the messaging game unfolds. 
In an equilibrium of the messaging game, along any path of play, the bidders not only update their beliefs about other bidders' types but also update their beliefs about the auctioneer's type as well.

\paragraph{Non-Credible Mechanisms.}\hspace{-2mm}The above observation also implies a flavor of ``revelation principle'' in the following sense. Suppose that $M: \Theta_0 \rightarrow \mathcal{G}$ is a non-credible mechanism. Consider the messaging game $M^*$ generated by $M$ as described in the model, and allow the auctioneer to choose any feasible strategy when running the messaging game. Suppose that there exists a pure-strategy Bayes-Nash equilibrium $(\tilde{S}_0, \tilde{S}_N)$ for the auctioneer and the bidders when playing this messaging game. Then, we claim that there exists a credible mechanism $\tilde{M}$ that yields the same equilibrium outcome. Indeed, for each auctioneer type $\theta_0$, the messaging-game strategy $\tilde{S}_0(\theta_0)$ defines an extensive-form game $G^{\tilde{S}_0(\theta_0)}$ played by the bidders. Now, define the mechanism $\tilde{M}$ by $\tilde{M}(\theta_0) = G^{\tilde{S}_0(\theta_0)}$ for all $\theta_0$. Consider the protocol $(\tilde{M}, \tilde{S}_N)$. By construction, $\tilde{S}_0$ is the play-by-the-book strategy for $\tilde{M}$. Moreover, the set of safe deviations $\text{Safe}(\tilde{S}_0, \tilde{S}_N)$ consists of strategies defined in the messaging game $\tilde{M}^*$ generated by $\tilde{M}$, but these strategies are feasible in the original messaging game $M^*$ and hence unprofitable for the auctioneer given that the bidders are playing according to $\tilde{S}_N$.

\subsection{Examples of Credible Mechanisms}

To further illustrate our definitions, we describe some examples of credible mechanisms. A simple credible mechanism is a posted-price mechanism in which the seller walks away without trading if the cost is above the fixed price, and otherwise she allocates the object randomly among the buyers who are willing to pay the fixed price. 

A more sophisticated credible mechanism is to run a ``price waterfall'': The seller sequentially contacts the buyers and offers a price at a time which can depend on the cost and differ across buyers, but each buyer only gets contacted once, and if accepted, must get the object. If the seller uses an optimized price waterfall mechanism, then the mechanism must be credible since each buyer accepts the offer if and only if their value is above the price, and thus any deviation by the seller in the messaging game would simply correspond to a price waterfall with a different sequence of prices.\footnote{The buyers' ordering in this sequence may also change, but they are ex ante symmetric so that would not affect the resulting revenue.} 

Many simple auctions are credible. For example, the seller can use a first-price auction with a public reserve combined with a secret reserve: She allocates the object to the maximal bidder if the bidder bids above both the public reserve and her private cost, and otherwise she keeps the object.\footnote{See \Cref{subsec:static} for a formal definition.} The seller can also use dynamic mechanisms. For example, the seller can use an English (ascending) auction or a Dutch (descending) auction with a fixed reserve price and walk away if the winning bid is below the cost. 

Note that the above examples of auctions are all suboptimal since they significantly restrict the flexibility of the auctioneer to condition on her own private information. As we show, it turns out that there is no static mechanism that achieves credibility and optimality at the same time. However, the English auction can be tailored to have cost-dependent reserves to implement the optimal mechanism in a credible way, while the Dutch auction cannot.

\section{Non-Credibility of First-Price Auctions}\label{sec:opt-static}

Our first result shows that, contrary to the case where the auctioneer's cost is publicly observable, the optimal first-price auction is no longer credible. In fact, we will show a stronger result: The seller has a profitable safe deviation with probability $1$---the seller would always want to cheat regardless of the cost type and the bidders' types. 

\begin{theorem}\label{thm:fpa}
In any optimal, first-price auction, the seller has a profitable safe deviation with probability $1$. 
\end{theorem}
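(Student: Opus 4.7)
The plan is to exhibit a single dynamic safe deviation in the sequential messaging game and to show it is strictly profitable for every seller type $\theta_0 \in (0,1)$---a set of probability one. Recall that in any optimal first-price auction regularity of $F$ forces the Myersonian reserve $r(\theta_0)=\psi^{-1}(\theta_0)$ (strictly increasing, with $r(\theta_0)>\theta_0$ for $\theta_0<1$) and, for types $\theta_i \ge r(\theta_0)$, the symmetric equilibrium bid function $\beta(\theta_i,r(\theta_0))$, which satisfies $\beta(r,r)=r$ and $\partial_r \beta>0$. Because a static mechanism only requires each bidder to move once, nothing prevents the seller from contacting bidders sequentially in the messaging game, and that is exactly what opens the door to dynamic safe deviations. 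The deviation I propose contacts bidders $1,\ldots,n-1$ truthfully with reserve $r(\theta_0)$, records the largest of their responses $b^{\star}$ (setting $b^{\star}=\bot$ if all declined), and then contacts bidder $n$ with an altered reserve. On the positive-probability event $\{b^{\star}>r(\theta_0)\}$ the seller announces to bidder $n$ the inflated reserve $r(\theta_0+\epsilon)$; on the complementary event $\{b^{\star}=\bot\}$, which has probability $F(r(\theta_0))^{n-1}>0$, she announces the deflated reserve $r(\theta_0-\epsilon)$. Both branches are safe because bidder $n$'s observation (reserve heard, own bid, win/pay outcome) is consistent with a seller of type $\theta_0\pm\epsilon$ combined with an innocent explanation about the remaining bidders' types.

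For the upward branch, given the floor $b=b^{\star}$, the conditional revenue from bidder $n$ is $R^{+}(\hat{\theta}_0;b)=bF(r(\hat{\theta}_0))+\int_{r(\hat{\theta}_0)}^{1}\max\bigl(b,\beta(\theta_n,r(\hat{\theta}_0))\bigr)f(\theta_n)\,\d\theta_n$. Differentiating at $\hat{\theta}_0=\theta_0$ the two boundary contributions cancel (since $\beta(r,r)=r<b$), leaving $\partial R^{+}/\partial\hat{\theta}_0=r'(\theta_0)\int_{\theta^{\star}}^{1}\partial_r\beta(\theta_n,r(\theta_0))\,f\,\d\theta_n>0$, where $\theta^{\star}=\beta^{-1}(b,r(\theta_0))$; intuitively an inflated reserve induces the types whose original bid already exceeded $b$ to bid strictly higher. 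For the downward branch, the analogous conditional revenue is $R^{-}(\hat{\theta}_0)=\int_{r(\hat{\theta}_0)}^{1}\beta(\theta_n,r(\hat{\theta}_0))f\,\d\theta_n-\theta_0\bigl[1-F(r(\hat{\theta}_0))\bigr]$; using $\beta(r,r)=r$ and the Myersonian identity $r(\theta_0)-\theta_0=(1-F(r))/f(r)$, the condition $\partial R^{-}/\partial\hat{\theta}_0<0$ reduces to $F(r)^{n-1}\int_r^{1}f(\theta)/F(\theta)^{n-1}\,\d\theta<1-F(r)$. The substitution $u=F(\theta)$ evaluates the left side to $F(r)(1-F(r)^{n-2})/(n-2)$ when $n\ge 3$ (or to $-F(r)\ln F(r)$ when $n=2$), and cancelling $1-F(r)$ reduces the claim to the elementary bound $F(r)+F(r)^{2}+\cdots+F(r)^{n-2}<n-2$ (respectively $\ln(1/F(r))<1/F(r)-1$), both of which are immediate.

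The hard part is the downward branch: one naively expects no gain from lowering the reserve below the pointwise-optimal $r(\theta_0)$. The resolution is that the contacted bidder does not realize she is effectively alone and keeps bidding the symmetric equilibrium, which strictly exceeds the reserve; consequently a small reduction of the announced reserve recruits additional types whose bid is essentially $r(\hat{\theta}_0)>\theta_0$, a first-order surplus that dominates the second-order reduction in the bids of higher types. Combining the two branches, the proposed dynamic deviation delivers a strict conditional expected gain on each of the two positive-probability events $\{b^{\star}>r(\theta_0)\}$ and $\{b^{\star}=\bot\}$, for every $\theta_0\in(0,1)$; hence it is strictly profitable (ex ante) with probability one over the seller's cost, which is what the theorem asks.
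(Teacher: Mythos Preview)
Your proof is correct and follows essentially the same two-branch strategy as the paper: contact the first $n-1$ bidders truthfully, observe the running maximum $b^\star$, and then manipulate the reserve shown to bidder $n$ upward or downward depending on whether $b^\star>r(\theta_0)$ or all bidders declined. The downward branch is identical in spirit; your explicit integral evaluation leading to $F(r)+\cdots+F(r)^{n-2}<n-2$ is just a more detailed version of the paper's one-line bound $(F(r)/F(\theta_i))^{n-1}<1$. The one substantive difference is in the upward branch: the paper sets the deviant reserve to exactly $b^\star$ (i.e.\ chooses $\hat\theta_0$ with $r(\hat\theta_0)=b^\star$) and argues \emph{directly} that this dominates---types $\theta_n\le b^\star$ leave revenue unchanged at $b^\star$, types $\theta_n>b^\star$ bid strictly more---whereas you take a marginal perturbation $r(\theta_0+\epsilon)$ and show the derivative is positive. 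Both are valid, but the paper's choice is cleaner because it sidesteps the question of how small $\epsilon$ must be as a function of the realized $b^\star$; in your version this dependence is implicit (and harmless, since the deviation is adaptive and the seller may choose $\epsilon$ after seeing $b^\star$), but it would be worth stating explicitly.
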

\begin{proof}[Proof of \Cref{thm:fpa}]

Suppose that $(M, S_N)$ is an optimal, first-price auction. By optimality, since the value distributions are symmetric and regular, for almost all $\theta_0$, the induced allocation must be symmetric and deterministic almost everywhere. Then, for any bidder $i$, the interim expected payment is pinned down almost everywhere. Thus, the induced bid by bidder $i$ of type $\theta_i$ when the auctioneer is of type $\theta_0$ 
\[b_i(\theta_i; \theta_0)\]
is pinned down up to a measure-zero set. In particular, $b_i(\theta_i; \theta_0)$ can be assumed to be symmetric across bidders, denoted by $b(\theta_i; \theta_0)$. 

Moreover, because the bidder value distributions are symmetric and regular, there exists a continuous, strictly increasing function $r(\theta_0)$ such that 
$b(\theta_i; \theta_0) = 0$ if $\theta_i < r(\theta_0)$
and $b(\theta_i; \theta_0)$ is strictly increasing in $\theta_i$ if $\theta_i \geq r(\theta_0)$.\footnote{Indeed, $r(\theta_0)$ is pinned down by $r - \frac{1 - F(r)}{f(r)} = \theta_0$, which has a unique solution since $F$ is regular.} Note also that we have 
\[b(\theta_i; \hat{\theta}_0) > b(\theta_i; \theta_0)\]
for all $ \hat{\theta}_0 > \theta_0$ and $\theta_i > r(\hat{\theta}_0)$.\footnote{To see this, note that by the Envelope theorem, \[b(\theta_i; \theta_0) = \theta_i - \frac{1}{Q(\theta_i; \theta_0)}\int^{\theta_{i}}_{r(\theta_0)} Q(s; \theta_0) \d s\]
where $Q(s; \theta_0)$ is the interim allocation probability. Under an optimal mechanism, we have that $Q(s;\theta_0) = F(s)^{|N|-1}$ for all $s \geq r(\theta_0)$ and hence the claim follows by noting that $r(\theta_0)$ is strictly increasing in $\theta_0$.}

Let $b^*$ be the maximal bid given by the first $|N|-1$ bidders. With probability $1$, we have either $b^* > r(\theta_0)$ or $b^* < r(\theta_0)$. 

\textbf{Case (A).} Consider the case of $b^* > r(\theta_0)$. Note also that $b^* < 1 = r(1)$. Now, fix type $\theta_0$ and consider type $\theta_0$'s deviation of pretending to be $\hat{\theta}_0 > \theta_0$ such that 
\[r(\hat{\theta}_0) = b^* \]
and asking the last bidder to bid in the set  $B_{|N|}(\hat{\theta}_0)$. The existence of $\hat{\theta}_0$ is guaranteed by continuity of $r(\,\cdot\,)$. We claim that this must be a strict improvement for the auctioneer. 

There are two cases. First, consider case \textit{(i)} where the last bidder has a value $\theta_{|N|}$ that is weakly below $b^*$. Then, the auctioneer would get revenue $b^*$ under this deviation. But, even if the auctioneer plays by the book, she also gets the same revenue $b^*$, since 
\[b(\theta_{|N|}; \theta_0) \leq \theta_{|N|} \leq b^*\,,\]
which means the maximal bid the auctioneer gets when playing by the book is $b^*$. 

Now, consider case \textit{(ii)} where the last bidder has a value $\theta_{|N|}$ that is strictly above $b^*$. Then, by construction, we have
\[\theta_{|N|} >  r(\hat{\theta}_0)\,.\]
But then the last bidder would bid even more under the auctioneer's deviation since 
\[b(\theta_{|N|}; \hat{\theta}_0) > b(\theta_{|N|}; \theta_0)\,.\]
Since case \textit{(ii)} happens with a positive probability, the safe deviation is a strict improvement for the auctioneer.

\textbf{Case (B).} Consider the case of $b^* < r(\theta_0)$. Fix any type $\theta_0 > 0$. We claim that there exists some $\varepsilon > 0$ small enough that the auctioneer of type $\theta_0$ has a strictly profitable deviation by mimicking $\hat{\theta}_0 = \theta_0 - \varepsilon$ when facing the last bidder. 

Since the reserve price $r(\,\cdot\,)$ is a strictly increasing, continuous function of the cost type, it suffices to show that type $\theta_0$ wants to slightly decrease the reserve price to the last bidder. Let $r^*:= r(\theta_0)$ be the Myersonian reserve for type $\theta_0$. Note that 
\[r^* \in \argmax_r \int^1_0 \1_{\theta_i \geq r} (r - \theta_0) f(\theta_i) \d\theta_i\]
which by regularity of $F$ implies that 
\[-(r^*-\theta_0)f(r^*) + (1 - F(r^*)) = 0\,.\]
Note that, under the deviation, when the auctioneer decides what reserve to charge to the last bidder, the above is not the auctioneer's objective function because the last bidder would assume that there are still $|N|$ bidders in total. Instead, the objective is  
\[W(r) := \int_0^1 \1_{\theta_i \geq r} \big(b^N(\theta_i; r) - \theta_0 \big) f(\theta_i) \d \theta_i\,,\]
where the last bidder bids according to $b^N(\,\cdot\,;r)$ which is the bidding function assuming $|N|$ bidders in total. The bidding function, by the Envelope theorem, is given by
\[b^N(\theta_i; r) = \theta_i - \frac{1}{Q_i(\theta_i; r)}\int^{\theta_{i}}_{r} Q_i(s; r) \d s\,\]
for all $\theta_i \geq r$, where we have $Q_i(\theta_i;r) = F(\theta_i)^{|N|-1}$ given the optimality of $(M, S_N)$. Thus, 
\[W(r) =  \int_0^1 \1_{\theta_i \geq r}\Big(\theta_i - \frac{1}{F(\theta_i)^{|N|-1}} \int_r^{\theta_i} F(s)^{|N|-1} \d s - \theta_0\Big) f(\theta_i) \d \theta_i\,.\]
Therefore, 
\[W'(r) = - (r - \theta_0) f(r) + \int^1_r f(\theta_i)\underbrace{\Big(\frac{F(r)}{F(\theta_i)}\Big)^{|N|-1}}_{< 1} \d\theta_i < -(r-\theta_0) f(r) + (1 - F(r))\,. \]
Thus, 
\[W'(r^*) < -(r^*-\theta_0) f(r^*) + (1 - F(r^*)) = 0\,. \]
Thus, the type $\theta_0$ auctioneer has a profitable deviation by slightly decreasing the reserve to the last bidder (by mimicking a slightly lower $\hat{\theta}_0$).
\end{proof}

\begin{rmk}
In the proof of \Cref{thm:fpa}, the profitable deviation by the seller is dynamic in the sense that it depends on the information revealed during the process of the auction, even though to each bidder, the deviation is \textit{indistinguishable from a static auction}. This can happen precisely because we allow the seller to have private information, which creates uncertainty over the possible reserves the bidders may face even in a static mechanism. 
\end{rmk}

\section{The Impossibility Theorem}\label{sec:Impossibility}

The first-price auctions are a particular class of static mechanisms. However, our second result, building on \Cref{thm:fpa}, shows a completely general impossibility result: There is in fact no static mechanism that is optimal and credible, once the seller has private information. 
\begin{theorem}\label{thm:static}
There exists \emph{no} mechanism that is  
\begin{itemize}
    \item[(i)] credible,
    \item[(ii)] static, and   
    \item[(iii)] optimal\,.
\end{itemize}
\end{theorem}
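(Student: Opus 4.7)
The plan is to follow the two-step strategy sketched in the introduction: first establish a structural characterization of symmetric credible static protocols (the content of \Cref{thm:char}), and then show that no mechanism in the characterized family can match the Myersonian benchmark.

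First, I would reduce attention to symmetric protocols. Since $F$ is symmetric and regular, the full-commitment optimum is symmetric and deterministic almost everywhere, so if $(M, S_N)$ is credible, static, and optimal, then symmetrizing the protocol via a uniform random relabeling of bidders preserves expected revenue, BIC, and voluntary participation, and also preserves credibility---the set of safe deviations commutes with label permutations, so any profitable safe deviation in the symmetrized protocol lifts to one in the original. This puts us inside the class covered by \Cref{thm:char}.

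Second, I would prove the characterization: every symmetric credible static protocol is a first-price auction with a \emph{cost-independent} bid space $B\subset\R_{\ge 0}$, plus a secret reserve equal to $\theta_0$ (the seller walks away when all bids are below $\theta_0$), and bidders play a symmetric bid function $b:\Theta_i\to B\cup\{\text{decline}\}$ that is also $\theta_0$-independent. The first-price part follows by adapting the argument of \citet{akbarpour2020credible} to each cost type; the cost-independence of $B$ follows by generalizing the dynamic deviation from \Cref{thm:fpa}: if $B$ varied with $\theta_0$, the seller could mimic a neighboring cost type to a later bidder and profitably shift the offered bid set, contradicting credibility.

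Third, I would derive a contradiction with optimality. The Myersonian optimum allocates to the highest-value bidder iff $\max_i\theta_i\ge r(\theta_0)$, where $r(\cdot)$ is continuous and strictly increasing by regularity. Under any protocol in the characterized family, the induced interim allocation varies with $\theta_0$ only through the indicator $\1\{b(\theta_i)\ge\theta_0\}$, so the effective participation threshold as a function of $\theta_0$ is a right-continuous step function valued in $\overline{B}$. Matching the continuous strictly increasing map $r(\cdot)$ on a full-measure set of $\theta_0$ would force $b=r^{-1}$ on the support; but then $b$ would fail the bidder's equilibrium first-order condition, which must integrate over the continuum of cost realizations rather than condition on a single $\theta_0$. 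Consequently, the induced allocation deviates from the Myersonian one on a positive-measure subset of $\Theta_0\times\Theta_N$, and by uniqueness (up to measure zero) of the virtual-surplus maximizer under regularity, this yields a strict revenue loss relative to the full-commitment optimum.

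The main obstacle I expect lies in step three in the presence of bid-space pooling (as in \Cref{ex:gap}): a pointwise allocation mismatch between the characterized mechanism and Myerson's does not immediately translate into a revenue ordering, because pooling can raise interim payments for some bidder types while simultaneously distorting the allocation. To handle this cleanly, I would express the seller's expected revenue via Myerson's virtual-surplus formula and show that the cost-independence of $b(\cdot)$ forces $\E_{\theta_0,\theta_N}\big[\sum_i Q_i(\theta_i;\theta_0)(v(\theta_i)-\theta_0)\big]$ to fall strictly short of $\E_{\theta_0,\theta_N}\big[\max_i(v(\theta_i)-\theta_0)^+\big]$ on a positive-measure subset of $\Theta_0$, using the strict monotonicity and continuity of $r(\cdot)$ together with the regularity of $F$. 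The symmetrization step is standard for full-commitment designs but needs explicit verification that credibility survives random relabeling of bidders; this would be handled by checking that safe deviations transfer across the symmetry.
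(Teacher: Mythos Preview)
Your plan has the right architecture---reduce to the characterized family and then argue impossibility---but there are two concrete problems.

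\textbf{The winner-paying step is missing.} \Cref{thm:char} requires \emph{winner-paying} as a hypothesis, and you never establish it. The paper spends a separate lemma (\Cref{lem:winner-paying}) on this: using the seller's own IC across cost types plus the Envelope theorem, one shows $U(\theta_0)\le 1-\theta_0$, so a seller of cost close to $1$ would deviate by mimicking a type that collects loser payments, keeping the object, and telling every bidder they were outbid. Without this step you cannot invoke the characterization. Your symmetrization-by-relabeling also differs from the paper, which instead proves directly (\Cref{lem:ex-post}) that any credible static optimal mechanism is already ex-post symmetric conditional on $\theta_0$; your averaging argument may be salvageable, but note that \Cref{thm:char} needs symmetry of the \emph{payment rule conditional on $\theta_0$}, not just interim symmetry averaged over $\theta_0$, so you would have to check carefully what random relabeling actually delivers.

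\textbf{Your contradiction argument in step three is both harder and shakier than necessary.} The claim that the participation threshold is a ``step function valued in $\overline{B}$'' is not right when $B$ is an interval, and the assertion that $b=r^{-1}$ ``fails the bidder's first-order condition'' is stated without proof. The paper sidesteps all of this with a one-line Envelope argument: for $\theta_i>r(0)$,
\[
\theta_i-b(\theta_i)=\frac{\int_{r(0)}^{\theta_i}Q(s)\,ds}{Q(\theta_i)}\le \theta_i-r(0),
\]
so every submitted bid satisfies $b(\theta_i)\ge r(0)$. Hence for any cost $\theta_0\in(0,r(0))$ the seller can never walk away, and the probability of trade is constant on $(0,r(0))$---contradicting the Myersonian allocation, which has $r(\cdot)$ strictly increasing. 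This avoids your virtual-surplus comparison and the pooling complication you flag at the end.
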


We defer the proof of \Cref{thm:static} to \Cref{subsec:proof}. The basic intuition behind \Cref{thm:static} can be understood as follows. With private costs, a high-cost seller always has the temptation of walking away from the auction to collect payments from the losing bidders---hence, credible auctions must be winner-paying. But we also know from \citet{akbarpour2020credible} that any optimal, winner-paying, static auctions suffer from the deviations of manipulating the transfers, e.g., misrepresenting the second-highest bid in the second-price auction, unless it is a first-price auction. But we have just shown in \Cref{thm:fpa} that the optimal first-price auctions are not credible due to the dynamic deviations of manipulating the reserves---hence, an impossibility. 

However, this intuition is incomplete because the mechanism $(M(\theta_0), S_N\mid_{\mathcal{I}_N(\theta_0)})$ used by a given type $\theta_0$ of the auctioneer need \textit{not} be optimal or even incentive-compatible if the type $\theta_0$ were to be revealed to the bidders. Indeed, by the \textit{\textbf{inscrutability principle}} of \citet{myerson1983mechanism}, the auctioneer can always use a mechanism that does not reveal her type at all when the bidders take their actions. The deviations identified in \Cref{thm:fpa} would not work directly in such mechanisms because they rely on influencing the last bidder's actions.  \Cref{thm:static} asserts that regardless of how the auctioneer may hide her private information, such a mechanism must either be suboptimal or introduce a profitable safe deviation. The actual proof proceeds in two steps. First, we fully characterize the space of symmetric, winner-paying, credible, and static mechanisms by generalizing our deviations in \Cref{thm:fpa} and the deviations in \citet{akbarpour2020credible}. Specifically, we show that these properties together imply the mechanism must be outcome-equivalent to a first-price auction with a \textit{\textbf{public}} bid space and a \textit{\textbf{walk-away}} option for the auctioneer. Second, we show that any optimal, credible, static mechanism must satisfy winner-paying and symmetry in our setting, and hence must be outcome-equivalent to such a mechanism, but any such mechanism cannot obtain the optimal revenue---hence, an impossibility. 

The rest of this section proceeds as follows. \Cref{subsec:static} characterizes the space of symmetric, winner-paying, credible, and static mechanisms. \Cref{subsec:proof} completes the proof of \Cref{thm:static}. \Cref{subsec:max} discusses how the auctioneer would maximize profits among the symmetric credible static mechanisms.

\subsection{Credible Static Mechanisms}\label{subsec:static}

Since we will allow for suboptimal mechanisms here, we will be more explicit about randomized mechanisms. In particular, let $\varepsilon \in [0, 1]$ be a randomization device and write $\tilde{y}(\theta_N, \theta_0, \varepsilon), \tilde{t}(\theta_N, \theta_0, \varepsilon)$ as the realized allocation and transfer rules. 

A protocol $(M, S_N)$ is a \textit{\textbf{pay-as-bid}} auction if, for each bidder $i$, there exists a function $b_i(\theta_i, \theta_0)$ such that almost everywhere in $\Theta_N \times \Theta_0 \times [0, 1]$, if $\tilde{y}(\theta_N, \theta_0, \varepsilon) = i$, then $\tilde{t}_i(\theta_N, \theta_0, \varepsilon) = b_i(\theta_i, \theta_0)$. 

\begin{lemma}\label{lem:psb}
Every static, credible mechanism must be a pay-as-bid auction. 
\end{lemma}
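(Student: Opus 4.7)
The plan is to prove the lemma by contrapositive: any failure of the pay-as-bid property yields a profitable safe deviation. I first record the simplification that staticity gives. In a static mechanism $M(\theta_0)$, each bidder $i$ has a single information set $I_i(\theta_0)$, determined by the seller's type alone, so bidder $i$'s reply $r_i$ is pinned down by $(\theta_i,\theta_0)$ through the interim strategy $S_i(\theta_i)$. Because bidder $i$ observes only the information set he is placed in, his own reply, and whether he won together with his own payment, his entire observation reduces to the triple $\bigl(I_i(\theta_0),r_i,(y_i,t_i)\bigr)$, and the first two coordinates depend on $(\theta_i,\theta_0)$ alone.

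Now suppose toward a contradiction that the conclusion fails for some bidder $i$, so that on a positive-measure set of $(\theta_i,\theta_0)$ the transfer $\tilde{t}_i$ is not essentially constant on the winning slice $\{(\theta_{-i},\varepsilon):\tilde{y}=i\}$. A standard measurable selection argument then yields a positive-measure set of states $(\theta_N,\theta_0,\varepsilon)$ with $\tilde{y}=i$ and an alternative winning realization $(\theta'_{-i},\varepsilon')$ sharing the same $(\theta_i,\theta_0)$ at which bidder $i$'s transfer is strictly larger:
\[
\tilde{t}_i(\theta_i,\theta_{-i},\theta_0,\varepsilon) \;<\; \tilde{t}_i(\theta_i,\theta'_{-i},\theta_0,\varepsilon').
\]
At each such state the seller deviates by sending every bidder the message she would have sent on the equilibrium path, collecting every bidder's on-path reply, and then delivering the original on-path outcome to each loser $j\neq i$ while allocating the object to bidder $i$ at the higher price $\tilde{t}_i(\theta_i,\theta'_{-i},\theta_0,\varepsilon')$ imported from the alternative realization.

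Safety of this deviation is checked bidder by bidder. Because $\theta_0$ is unchanged, bidder $i$ is placed in the same information set $I_i(\theta_0)$ and, being of type $\theta_i$, produces the same reply $r_i$ in both realizations; his observation therefore coincides with the on-path observation under the innocent explanation $(\hat{\theta}_0,\hat{\theta}_{-i},\hat{\varepsilon})=(\theta_0,\theta'_{-i},\varepsilon')$. Each bidder $j\neq i$ sees exactly the observation from the original on-path play at the true profile, so his innocent explanation is the true profile itself. Profitability is immediate: the object is still sold to $i$ (so the seller's cost is unchanged), the transfer from $i$ is strictly larger, and transfers from every other bidder are unchanged; integrating over the positive-measure set of deviating states gives a strict expected-profit improvement, contradicting credibility. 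The only nontrivial step I anticipate is the measurable selection of $(\theta'_{-i},\varepsilon')$ as a function of $(\theta_i,\theta_0)$ on the violating set, which I expect to be routine once one notes that the essential supremum of $\tilde{t}_i$ on each winning slice is itself measurable in $(\theta_i,\theta_0)$.
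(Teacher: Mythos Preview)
Your argument is correct and is essentially the same as the paper's, which simply observes that for each fixed $\theta_0$ the slice $(M(\theta_0),S_N\!\mid_{\mathcal{I}_N(\theta_0)})$ is itself a static, credible protocol (in the sense of \citet{akbarpour2020credible} with no seller private information) and then invokes their Theorem~1, folding the randomization $\varepsilon$ into the opponent type profile. Your direct construction---fix $(\theta_i,\theta_0)$, overcharge winner $i$ by pointing to an alternative $(\theta'_{-i},\varepsilon')$ with a higher on-path transfer, keep losers' outcomes unchanged---is precisely the deviation underlying that theorem, so the only difference is that you spell out the argument rather than cite it; the measurable-selection caveat you flag is the same one the paper defers to \citet{akbarpour2020credible}.
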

\begin{proof}[Proof of \Cref{lem:psb}]
Note that for each $\theta_0$, $\Big(M(\theta_0), S_N\mid_{\mathcal{I}_N(\theta_0)}\Big)$ must also be static and credible. Now, conditional on $\theta_0$, we apply Theorem 1 of \citet{akbarpour2020credible} as follows: In particular, for each bidder $i$ and type $\theta_i$, we define $\tilde{\theta}_{-i}:= (\theta_{-i}, \varepsilon)$ as an auxiliary ``opponent type profile.'' Since the randomization $\varepsilon$ is realized after the auctioneer chooses the game, each bidder $i$ observes no additional information about $\varepsilon$ beyond his outcome (whether he wins and his own payment), just like for the opponent type profile $\theta_{-i}$. Therefore, the result follows by applying Theorem 1 of \citet{akbarpour2020credible} to the auxiliary opponent type profile, for each auctioneer type $\theta_0$. 
\end{proof}

Note that a pay-as-bid auction can still have outcomes that are dependent on the auctioneer's types in arbitrary ways. However, we now show that for symmetric and winner-paying mechanisms, the credibility concern due to the seller's own private information will constrain such dependency in a stark way. 

A protocol $(M, S_N)$ is \textit{\textbf{symmetric}} if, for each $\theta_0$, the induced allocation probability
\[q(\theta_N, \theta_0) := \Big(q_i(\theta_N, \theta_0)\Big)_{i \in N} := \Big(\P\big(\tilde{y}(\theta_N, \theta_0, \varepsilon) = i\big) \Big)_{i \in N}\]
is symmetric (i.e., permutation-invariant) across the bidders and the induced payment rule is also symmetric across the bidders.\footnote{The notion of symmetry here follows  \citet{manelli2010bayesian}. The symmetry assumption on the induced payment rule can be relaxed to hold only on the interim payment rule averaging over other bidders' types (but not averaging over the seller's types).} A protocol $(M, S_N)$  is \textit{\textbf{winner-paying}} if almost everywhere in $\Theta_N \times \Theta_0 \times [0, 1]$, if $\tilde{t}_i(\theta_N, \theta_0, \varepsilon) \neq 0$, then $\tilde{y}(\theta_N, \theta_0, \varepsilon) = i$.

A protocol $(M, S_N)$ is a \textit{\textbf{first-price auction with a walk-away option}} if $(M, S_N)$ is static, and for each $\theta_0$, each bidder $i$ chooses a bid $b_i$ from a set $B_i(\theta_0) \subset \R_{\geq 0}$ or declines to bid, such that:
\begin{itemize}
    \item[1.] Each bidder $i$ pays $b_i$ if he wins and $0$ if he loses. 
    \item[2.] If any bidder places a bid, then either some maximal bidder wins the object or the object is kept by the auctioneer. Otherwise, no bidder wins. 
\end{itemize}

A first-price auction with a walk-away option $(M, S_N)$ has a \textit{\textbf{public bid space}} $B \subset \R_{\geq 0}$ if there exists $B \subset \R_{\geq 0}$ such that $B_i(\theta_0) = B$ for all bidders $i$ and all auctioneer types $\theta_0$. 

Two protocols $(M, S_N)$  and $(M', S'_N)$ are \textit{\textbf{outcome-equivalent}} if they induce the same ex-post allocation and transfer rules, almost everywhere in $\Theta_N \times \Theta_0$.

\begin{theorem}\label{thm:char}
Every symmetric, winner-paying, static, and credible $(M, S_N)$ is outcome-equivalent to a first-price auction with a walk-away option and a public bid space, in which the seller walks away if and only if the maximal bid is weakly lower than the cost. 
\end{theorem}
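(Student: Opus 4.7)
The plan is to organize the proof as three reductions. After invoking \Cref{lem:psb} to get a pay-as-bid structure, I can read the bidder's action at her unique information set as a bid $b_i \in B_i(\theta_0)$ (or decline), with the winner paying her own bid; symmetry lets me take $B_i(\theta_0) = B(\theta_0)$ common across bidders. With that in hand I would first pin down the ex post allocation and the walk-away rule. If on a positive-probability event the prescribed winner is not maximal, the seller profitably re-routes the object to a maximal bidder, and the deviation is safe because both the new winner's observation (``bid $b_k$, win at price $b_k$'') and the old winner's observation (``bid $b_j$, lose'') are on-path outcomes for a suitably fabricated $\theta_{-i}$ in the rich type space. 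The symmetric argument forces the seller to allocate whenever the maximal bid strictly exceeds $\theta_0$ and to keep the good whenever the maximal bid is strictly below $\theta_0$, which is the stated rule up to a tie-breaking convention.

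The main obstacle is to rule out any dependence of $B(\theta_0)$ on $\theta_0$; here I would adapt the dynamic deviation from the proof of \Cref{thm:fpa}. Fix an order in which the seller contacts the bidders under play-by-the-book and consider the step at which she approaches a bidder $j$ after collecting the earlier bids $b_{-j}$. Showing $j$ any alternative message $M(\theta_0'')$ is safe, because bidder $j$ cannot distinguish the deviation from the $\theta_0''$-path of the original mechanism, while the earlier bidders have already moved and can absorb the subsequent allocation via suitably chosen $\theta_{-i}$, as in the first step. Credibility therefore demands that $B(\theta_0)$ be an interim maximizer, conditional on $(\theta_0, b_{-j})$, of the seller's expected continuation revenue over the menu $\{B(\theta_0'') : \theta_0'' \in \Theta_0\}$ for almost every history.

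The heart of the argument is a revealed-preference calculation mirroring Cases A and B of \Cref{thm:fpa}. On histories where $\max b_{-j}$ exceeds the reserve of $B(\theta_0)$, the seller's optimal interim reserve increases with $\max b_{-j}$, so if the menu contains a strictly higher reserve some history makes deviating to that reserve strictly profitable for type $\theta_0$; on histories where $\max b_{-j}$ lies below the prescribed reserve, the envelope calculation from Case B of \Cref{thm:fpa} shows that a marginal downward shift in the reserve is strictly profitable whenever the menu contains such a lower reserve. Either scenario breaks credibility, so $B(\theta_0)$ cannot depend on $\theta_0$, and combining with the allocation and walk-away reductions above yields outcome-equivalence to a first-price auction with public bid space $B$ and the stated walk-away rule. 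The most delicate part will be carefully tracking that the reallocation made by the seller after bidder $j$'s deviating bid remains safe for every bidder whose outcome it flips, which is handled as in the first step by exploiting the richness of the type space.
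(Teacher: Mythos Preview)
Your first two reductions---pay-as-bid via \Cref{lem:psb}, and the allocation/walk-away rule via maximal-bidder and keep-the-object deviations---track the paper's Step~1 and are correct.

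The gap is in your third reduction. The Case~A and Case~B deviations from \Cref{thm:fpa} rely essentially on the \emph{optimality} of the mechanism: Case~A uses that $b(\theta_i;\hat{\theta}_0)>b(\theta_i;\theta_0)$ for $\hat{\theta}_0>\theta_0$, which comes from the Envelope formula with the Myersonian interim allocation $Q(s;\theta_0)=F(s)^{|N|-1}$; Case~B uses the Myersonian first-order condition $-(r^*-\theta_0)f(r^*)+(1-F(r^*))=0$ together with that same allocation. None of this is available here---\Cref{thm:char} assumes only symmetry, winner-paying, static, and credible, \emph{not} optimality. The bid spaces $B(\theta_0)$ need not be intervals, need not have a well-defined ``reserve,'' and the induced bidding functions need not be ordered across seller types. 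Your claim that ``the seller's optimal interim reserve increases with $\max b_{-j}$'' and your appeal to ``the envelope calculation from Case~B'' both presuppose this unavailable structure. And even if you could pin down that reserves coincide, that would not show the full bid spaces coincide (cf.\ \Cref{ex:gap}, where the bid space has a gap).

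The paper avoids this by comparing the induced bid \emph{distributions} $G(\,\cdot\,;\theta_0)$ directly. For each $\theta_0$ it defines $\Phi(s;\theta_0)=\int_s^1(b-s)\,\d G(b;\theta_0)$---the seller's continuation value when she already holds a running maximal bid $s$---and shows that any failure of $\Phi(\,\cdot\,;0)\geq\Phi(\,\cdot\,;\hat{\theta}_0)$ on $\supp(G_0)$ (and the reverse inequality on $\supp(\hat{G})\cap[\hat{\theta}_0,1]$) gives one of the two seller types a profitable safe deviation by mimicking the other when contacting the last bidder after a suitable maximal bid $s$. Convexity of $\Phi$ extends the inequalities off the supports, and combining the two directions forces $G(\,\cdot\,;\hat{\theta}_0)=G(\,\cdot\,;0)$ on $[\hat{\theta}_0,1]$. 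This ``induction from the top'' uses no structure beyond pay-as-bid and the walk-away rule already established, so it works without optimality. You should replace your Case~A/B reasoning with this distributional comparison.
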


Under symmetry and winner-paying, \Cref{thm:char} shows that, once the seller is constrained by credibility, any static mechanism cannot depend on the seller's cost in any \textit{ex ante} way before communicating with the bidder. Indeed, the bid space is public among all bidders, and the mechanism depends on the cost only via the \textit{ex post} comparison of the seller's cost and the maximal bid. 

The proof of \Cref{thm:char} is in the appendix. To illustrate the intuition, we first sketch the key argument for why the bid spaces $B(\theta_0)$ do not depend on $\theta_0$, assuming that the mechanism $(M , S_N)$ is a first-price auction where the bid spaces $B(\theta_0)$ are discrete, and that the seller never walks away.

To show a public bid space, it suffices to show that, regardless of the auctioneer's type $\theta_0$,  in the bidding game that type-$\theta_0$ auctioneer runs, the symmetric bidding function $b(\theta_i; \theta_0)$ does not depend on $\theta_0$. Because the bidding function is monotone in $\theta_i$, it then suffices to show that the bid distribution $G(\theta_0)$ does not depend on $\theta_0$. 

The key idea is to prove by \textit{induction from the top} (see \Cref{fig:induction}). To illustrate, suppose that we have two auctioneer types $\theta_0$ and $\hat{\theta}_0$. Let $G$ and $\hat{G}$ denote the two bid distributions associated with $\theta_0$ and $\hat{\theta}_0$, respectively. Let $B$ and $\hat{B}$ be the supports of $G$ and $\hat{G}$, respectively. We assume that $B$ and $\hat{B}$ are finite, and write $B = \{b^1, \dots, b^n\}$, where $b^1 > \cdots > b^n$, and $\hat{B} = \{\hat{b}^1, \dots, \hat{b}^m\}$, where $\hat{b}^1 > \cdots > \hat{b}^m$. We further denote the probability mass of $G$ on $b^k$ by $\mu^k$, and similarly the probability mass of $\hat{G}$ on $\hat{b}^k$ by $\hat{\mu}^k$.

\begin{figure}[t]
    \centering
    \begin{subfigure}[b]{0.45\textwidth}
        \centering
        \includegraphics[width=\textwidth]{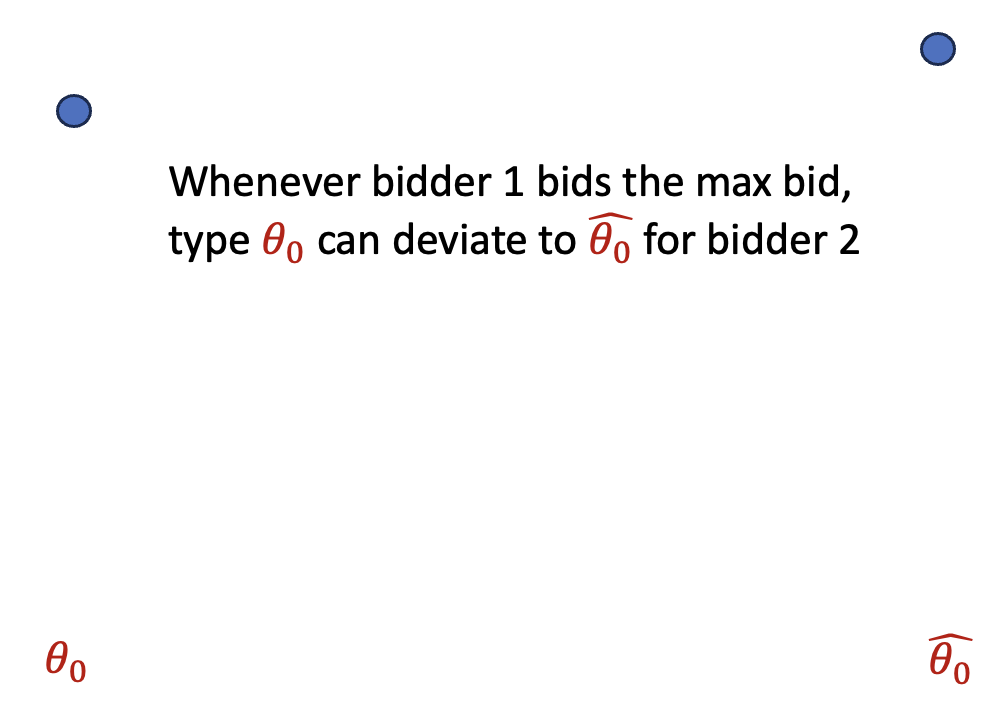}
        \caption{\text{ }}
        \label{fig:panelA}
    \end{subfigure}
    \hfill
    \begin{subfigure}[b]{0.46\textwidth}
        \centering
        \includegraphics[width=\textwidth]{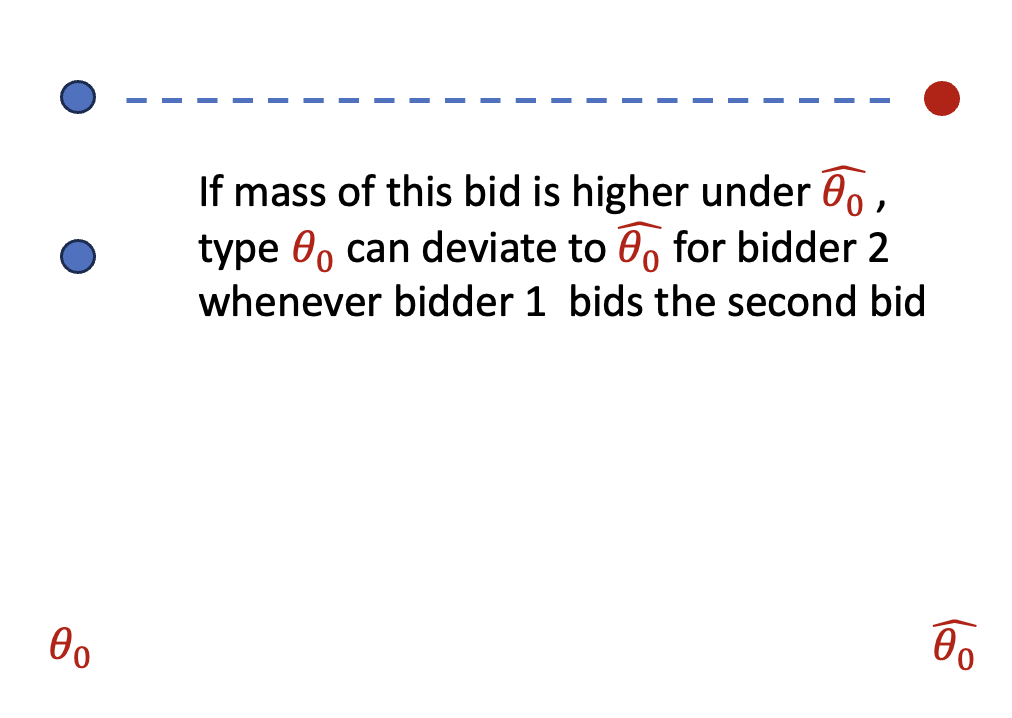}
        \caption{\text{ }}
        \label{fig:panelB}
    \end{subfigure}
    
    \vskip\baselineskip
    \begin{subfigure}[b]{0.45\textwidth}
        \centering
        \includegraphics[width=\textwidth]{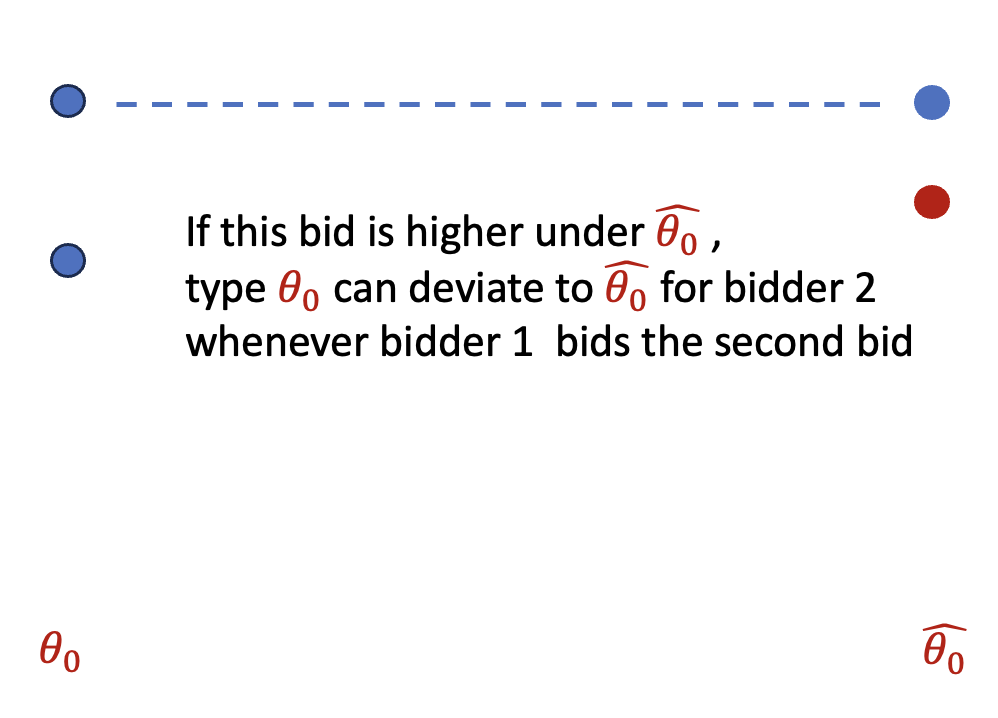}
        \caption{\text{ }}
        \label{fig:panelC}
    \end{subfigure}
    \hfill
    \begin{subfigure}[b]{0.46\textwidth}
        \centering
        \includegraphics[width=\textwidth]{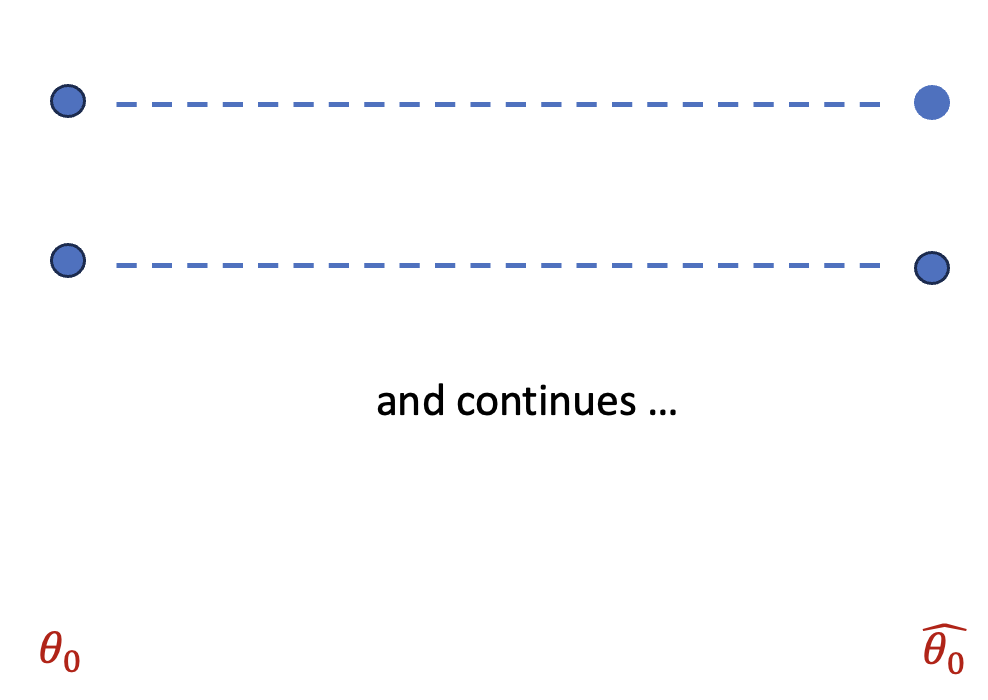}
        \caption{\text{ }}
        \label{fig:panelD}
    \end{subfigure}
    
    \caption{Illustration of the induction argument for the proof of \Cref{thm:char}}
    \label{fig:induction}
\end{figure}

First, we claim that the top bid must be the same, i.e., $b^1 = \hat{b}^1$. Suppose for contradiction that this is not the case, e.g., $\hat{b}^1 > b^1$. Then, note that type-$\theta_0$ auctioneer has a profitable safe deviation: First, elicit the bids from $|N|-1$ bidders, and then mimic type $\hat{\theta}_0$ when facing the last bidder if the highest bid the auctioneer receives from the $|N|-1$ bidders is already $b^1$. Since this contradicts credibility, we have $b^1 = \hat{b}^1$. \Cref{fig:panelA} illustrates.

Second, we claim that the probability mass on the top bid must also be the same, i.e., $\mu^1 = \hat{\mu}^1$. Suppose for contradiction that this is not the case, e.g., $\hat{\mu}^1 > \mu^1$. Then, by the previous step, type-$\theta_0$ auctioneer has the following profitable safe deviation: First, elicit the bids from $|N|-1$ bidders, and then mimic type $\hat{\theta}_0$ when facing the last bidder if the highest bid the auctioneer receives from the $|N|-1$ bidders is $b^2$. Indeed, if playing by the book, type-$\theta_0$ auctioneer has only a probability of $\mu^1$ to increase the highest bid from $b^2$ to $b^1$, but following this deviation, the auctioneer is strictly better off in expectation.  \Cref{fig:panelB} illustrates.

Now, we claim that $b^2 = \hat{b}^2$. Indeed, if this is not the case, e.g., if $\hat{b}^2 > b^2$, then by the previous step, type-$\theta_0$ auctioneer has again a profitable safe deviation as before: First, elicit the bids from $|N|-1$ bidders, and then mimic type $\hat{\theta}_0$ when facing the last bidder if the highest bid the auctioneer receives from the $|N|-1$ bidders is $b^2$. This then implies that we must also have $\mu^2 = \hat{\mu}^2$. Indeed, if this is not the case, e.g., if $\hat{\mu}^2 > \mu^2$, then type-$\theta_0$ auctioneer also has a profitable safe deviation: First, elicit the bids from $|N|-1$ bidders, and then mimic type $\hat{\theta}_0$ when facing the last bidder if the highest bid the auctioneer receives from the $|N|-1$ bidders is $b^3$. \Cref{fig:panelC} illustrates.

Then, the induction argument continues and we must have $b^k = \hat{b}^k$ for all the bids in the support, and $\mu^k = \hat{\mu}^k$ for all the probability mass associated with the bids. Thus, the two bid distributions $G$ and $\hat{G}$ must be identical.  \Cref{fig:panelD} illustrates.

\begin{rmk}
The actual proof follows this argument closely, after showing that the mechanism $(M, S_N)$ must be outcome-equivalent to a first-price auction where the seller can walk away. In addition, the actual proof generalizes the above induction argument to \textit{(i)} allow arbitrary bid distributions, which can also be continuous, and to \textit{(ii)} incorporate that the auctioneer will walk away when the maximal bid is below her cost. 
\end{rmk}

\subsection{Proof of the Impossibility Theorem}\label{subsec:proof}

Now we prove \Cref{thm:static}. Suppose for contradiction that $(M, S_N)$ is an optimal, credible, and static mechanism. As in the proof of \Cref{thm:fpa}, by optimality, since the value distributions are symmetric and regular, the induced allocation rule must be deterministic almost everywhere. Moreover, by credibility, the induced transfer rule $\tilde{t}(\theta_N, \theta_0)$ must also be deterministic almost everywhere, and there exists some $b_i(\theta_i, \theta_0)$ such that  $\tilde{t}_i(\theta_N, \theta_0) = b_i(\theta_i, \theta_0)$ for almost all $\theta_N$ and $\theta_0$ such that $\tilde{y}(\theta_N, \theta_0) = i$ by \Cref{lem:psb}. 

\textbf{Step 1.} We first show that any credible static optimal mechanism must be \textit{\textbf{winner-paying}} in our setting, i.e., for almost all $\theta_N$ and $\theta_0$, if $\tilde{t}_i(\theta_N, \theta_0) \neq 0$, then $\tilde{y}(\theta_N, \theta_0) = i$. 

\begin{lemma}\label{lem:winner-paying}
Every credible static optimal mechanism must be winner-paying. 
\end{lemma}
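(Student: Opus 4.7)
The plan is to derive a contradiction by constructing, for the seller of the highest cost type $\theta_0 = 1$, a strictly profitable safe deviation of exactly the ``walk away and collect losing payments'' form foreshadowed in the paragraph preceding the lemma. Suppose toward contradiction that $(M, S_N)$ is credible, static, and optimal, but not winner-paying, so that on a positive-measure subset of $\Theta_0 \times \Theta_N$ some bidder $i$ loses while paying a positive transfer.

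First I would pin down the anchor payoff at the top of $\Theta_0$. By optimality and regularity of $F$, the interim allocation in each slice $\theta_0$ is Myersonian with reserve $r(\theta_0)$ solving $r - \frac{1 - F(r)}{f(r)} = \theta_0$, so $r(1) = 1$ and the induced allocation at $\theta_0 = 1$ vanishes almost surely. The envelope representation of bidder $i$'s interim utility, together with voluntary participation and the nonnegativity of transfers, then forces every interim payment $p_i(\theta_i; 1)$ to be zero and hence $\tilde{t}_i(\theta_N, 1) = 0$ almost everywhere. Thus the by-the-book profit at $\theta_0 = 1$ is exactly $0$, and the non-winner-paying violation must sit on a positive-measure set with $\theta_0 < 1$; by Fubini I may fix $\theta_0^* \in (0, 1)$ such that $L := \E_{\theta_N}\!\big[\sum_{i \in N} \tilde{t}_i(\theta_N, \theta_0^*)\,\1\{\tilde{y}(\theta_N, \theta_0^*) \neq i\}\big] > 0$.

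Next I would specify the deviation. In the messaging game, the type-$1$ seller sends the information sets prescribed by $M(\theta_0^*)$ rather than those of $M(1)$, so each bidder $i$ plays his prescribed interim strategy $S_i(\theta_i)$ at the information sets in $\mathcal{I}_i(\theta_0^*)$, exactly as he would against an honest $\theta_0^*$-seller. Once she has collected the responses, she privately sets $y' = 0$ and charges each bidder $i$ a safe loss payment $t'_i$, defined as follows: if $\tilde{y}(\theta_N, \theta_0^*) \neq i$ she charges $t'_i := \tilde{t}_i(\theta_N, \theta_0^*)$, which is consistent with the innocent explanation $(\hat{\theta}_0, \hat{\theta}_{-i}) = (\theta_0^*, \theta_{-i})$; if instead $\tilde{y}(\theta_N, \theta_0^*) = i$, then because $\theta_i < 1$ almost surely and $F$ has full support, some $\hat{\theta}_{-i}$ with a coordinate strictly exceeding $\theta_i$ exists under which play-by-the-book at $(\theta_0^*, (\theta_i, \hat{\theta}_{-i}))$ makes $i$ lose, and she charges the corresponding nonnegative $\tilde{t}_i((\theta_i, \hat{\theta}_{-i}), \theta_0^*)$. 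The whole deviation is therefore pointwise safe.

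Finally, comparing payoffs pointwise in $\theta_N$, the deviation yields $\sum_{i \in N} t'_i$ with zero allocation cost, and this is at least $\sum_{i:\, \tilde{y}(\theta_N, \theta_0^*) \neq i} \tilde{t}_i(\theta_N, \theta_0^*) \geq 0$; taking expectations gives at least $L > 0$, while play-by-the-book at $\theta_0 = 1$ yields $0$. The deviation is thus strictly profitable, contradicting credibility. The main obstacle I anticipate is the safety bookkeeping for the would-be winner, which forces me to lean on the full-support and continuity assumptions on $F$ and on the transfer-nonnegativity requirement to guarantee the existence of an innocent explanation $\hat{\theta}_{-i}$ that both makes $i$ lose at $\theta_0^*$ and produces a well-defined, nonnegative loss payment.
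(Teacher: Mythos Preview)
Your overall plan and the deviation you construct are essentially the same as the paper's, but there is a genuine gap in your first step. You claim that because the Myersonian allocation at $\theta_0 = 1$ is zero, ``the envelope representation of bidder $i$'s interim utility, together with voluntary participation and the nonnegativity of transfers, then forces every interim payment $p_i(\theta_i; 1)$ to be zero.'' This does not follow. The bidder's envelope theorem and the voluntary-participation bound only pin down payments averaged over $\theta_0$ (more precisely, conditional on the information set the bidder receives). By the inscrutability principle, the mechanism may pool the seller's type $\theta_0 = 1$ with strictly lower types in the bidders' information sets; a bidder who cannot distinguish $\theta_0 = 1$ from other seller types may rationally accept a strictly positive payment at the $\theta_0 = 1$ slice in exchange for allocation at lower seller types. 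Nothing in your listed ingredients rules this out, so you have not established that the type-$1$ seller's by-the-book payoff is zero.

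The paper closes this gap using the \emph{seller's} side, not the bidders'. Credibility implies the seller of type $\theta_0$ cannot gain by mimicking another type $\hat\theta_0$ at the start of the messaging game; the envelope theorem for the seller then gives $U(\theta_0)=\int_{\theta_0}^{1}Q(s)\,ds+U(1)$, with $Q$ the Myersonian trade probability. Since the optimal first-price auction satisfies the same envelope formula with anchor $U_{\mathrm{FPA}}(1)=0$ and achieves the same expected profit, equating expectations forces $U(1)=0$ and hence $U(\theta_0)\le 1-\theta_0$. With that bound in hand, your deviation (or the paper's near-identical one, applied to any $\theta_0>1-\tfrac{1}{2}\delta$) goes through. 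So the fix is to replace your bidder-side envelope step with this seller-side argument; once you do, the rest of your proof is correct and matches the paper's.
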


The proof of \Cref{lem:winner-paying} is in the appendix. We sketch the proof here. First, we show that any type $\theta_0$ of the seller can obtain at most $1 - \theta_0$ in any such mechanism. Indeed, by credibility, the mechanism must be incentive-compatible for the type-$\theta_0$ seller to follow the prescribed strategy instead of the strategy of type $\hat{\theta}_0$. Therefore, by the Envelope theorem, given that the allocation rule must be the Myersonian allocation rule, the expected payoff of each type-$\theta_0$ seller is pinned down up to the payoff of the highest cost type. However, under any optimal, first-price auction, we have that \textit{(i)} the payoff of the highest cost type is $0$, and \textit{(ii)} the seller of type $\theta_0$ also has no incentive to mimic another type $\hat{\theta}_0$ (at the beginning) given that the reserves are already set optimally.\footnote{Indeed, the profitable deviation in \Cref{thm:fpa} only happens during the process of the auction as the seller learns, not at the beginning.} Therefore, any optimal, credible mechanism must yield the seller of the highest cost type $0$ payoff, which then implies that any cost type $\theta_0$ can obtain a payoff of at most $1 - \theta_0$. 

Now suppose for contradiction that there exists an optimal, credible, and static mechanism that is not winner-paying. In particular, at some cost type $\hat{\theta}_0$, the expected payment from the losing bidders must be strictly positive. Then, the auctioneer of sufficiently high cost type $\theta_0$, which by the above observation implies a vanishing payoff, must have an incentive to deviate to the cost type $\hat{\theta}_0$ to collect the payments from the losing bidders, and then claim to every bidder that they are outbid by someone else and keep the object herself. A contradiction. 

\textbf{Step 2.} Now, we know that $(M, S_N)$ must be credible, static, optimal, winner-paying, and pay-as-bid. We next show that $(M, S_N)$ must be a \textit{\textbf{symmetric}} protocol as defined in \Cref{subsec:static}. In particular, we show that for any cost type $\theta_0$, the ex-post payment rule of $(M, S_N)$ must be symmetric across the bidders. 

\begin{lemma}\label{lem:ex-post}
Every credible static optimal mechanism must be symmetric. 
\end{lemma}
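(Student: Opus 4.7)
The plan is to combine optimality, the structural results of \Cref{lem:psb} and \Cref{lem:winner-paying}, and a credibility-driven relabeling argument to force ex-post symmetry of the payment rule.

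First, I would record what the previous steps already deliver. By optimality together with regularity and symmetry of $F$, the induced allocation rule $\tilde y$ agrees almost everywhere with the Myersonian allocation, which assigns the good to the bidder with the highest value whenever that value exceeds the cost-dependent reserve $r(\theta_0)$. This allocation is permutation-invariant in $\theta_N$ for each $\theta_0$, so the allocation half of symmetry is immediate. By \Cref{lem:psb} and \Cref{lem:winner-paying}, the transfers are pay-as-bid and winner-paying, so
\[\tilde t_i(\theta_N,\theta_0)=b_i(\theta_i,\theta_0)\cdot\1_{\tilde y(\theta_N,\theta_0)=i}\]
for some bid functions $\{b_i\}_{i\in N}$. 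The task therefore reduces to showing that $b_i(\cdot,\theta_0)$ does not depend on $i$ almost everywhere.

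Second, I would extract interim symmetry via standard revenue-equivalence machinery. Applying BIC, the envelope theorem, and $U_i(0)=0$ (which follows from optimality together with the $1-\theta_0$ upper bound on seller payoffs derived in the proof of \Cref{lem:winner-paying}, combined with voluntary participation) pins down the interim expected payment as $t_i(\theta_i)=\theta_i Q(\theta_i)-\int_0^{\theta_i}Q(s)\,\d s$, where $Q(\theta_i)=\E_{\theta_{-i},\theta_0}\bigl[\1_{\theta_i>\max\{\theta_{-i},r(\theta_0)\}}\bigr]$ is the common Myersonian interim win probability. Thus $t_i(\theta_i)$ is the same function of $\theta_i$ across bidders.

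Third and key, I would upgrade interim to ex-post symmetry by exhibiting a profitable safe deviation whenever $b_i(\cdot,\theta_0)\not\equiv b_j(\cdot,\theta_0)$ on a positive-measure set. Consider the seller's deviation in the messaging game in which, instead of delivering bidder $j$'s prescribed script, she delivers bidder $i$'s prescribed script to bidder $j$. Since bidders are ex ante symmetric and each bidder observes only the message he receives together with his own win/payment outcome, one can find $(\hat\theta_0,\hat\theta_{-j})$ that makes bidder $j$'s observation under the deviation indistinguishable from one generated by the original protocol, so the deviation is safe. Because interim equality forces any ex-post asymmetry to average out across $(\theta_0,\theta)$, the seller can adapt the relabeling to the bids she has already observed from earlier bidders, selecting the direction of the swap so as to exploit the sign of $b_i-b_j$ at the revealed types. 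This yields a strict revenue improvement for a positive-measure set of cost types, contradicting credibility. Hence $b_i(\cdot,\theta_0)\equiv b_j(\cdot,\theta_0)$ almost everywhere, completing the proof.

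The main obstacle is the safety verification in the third step: one must confirm that bidder $j$'s observation after being handed bidder $i$'s script is truly indistinguishable from one arising under the prescribed protocol for some $(\hat\theta_0,\hat\theta_{-j})$. The argument rests on the ex-ante symmetry of the value distributions and on the fact that, although the nominal information sets $\mathcal{I}_j(\theta_0)$ and $\mathcal{I}_i(\theta_0)$ are bidder-indexed, the union $\mathcal{I}_j=\bigcup_{\theta_0}\mathcal{I}_j(\theta_0)$ is rich enough (by symmetry of primitives) to absorb bidder $i$'s scripts as consistent with some off-path cost type. A secondary subtlety is converting pointwise asymmetries that cancel in interim expectation into a net strict gain, which requires the seller to exploit her ability to deploy a genuinely dynamic deviation across bidders even within a static protocol, mirroring the mechanism behind \Cref{thm:fpa}.
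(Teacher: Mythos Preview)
Your third step has a genuine gap that is not repaired by the acknowledgment at the end. In the messaging game, the auctioneer can only send bidder $j$ a message belonging to $\mathcal{I}_j=\bigcup_{\theta_0}\mathcal{I}_j(\theta_0)$; she cannot simply ``deliver bidder $i$'s script'' to bidder $j$ unless that script already lives in $\mathcal{I}_j$. Whether it does is a property of the mechanism $M$, not of the primitives. Symmetry of the value distributions says nothing about whether the designer chose an asymmetric extensive form, and the whole point of the lemma is to rule out asymmetric mechanisms---so appealing to ``symmetry of primitives'' to argue that $\mathcal{I}_j$ is rich enough to absorb $\mathcal{I}_i$ is circular. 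Without this, your relabeling deviation may be infeasible, and even if feasible, the safety check fails: bidder $j$ receiving an information set he could never receive under any $(\hat\theta_0,\hat\theta_{-j})$ immediately detects the deviation.

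The paper avoids this entirely by using a different class of deviations that are always feasible and safe regardless of how asymmetric the extensive form is: the seller either keeps the object (claiming some opponent has type $1$) or reallocates to whichever bidder $j$ has the highest $b_j(\theta_j,\theta_0)$ (claiming $j$ has the highest type). Credibility then forces the winner's bid to weakly exceed $\max\{\theta_0,\max_{j\neq i}b_j(\theta_j,\theta_0)\}$. The second ingredient you are missing is \emph{continuity} of $b_i(\cdot,\theta_0)$ in $\theta_i$, obtained from the Envelope theorem applied conditionally on each information set $I_i$; this is what lets one perturb a type profile slightly and pit the credibility inequality against the Myersonian allocation (highest type wins) to force $b_i(s,\theta_0)=b_j(s,\theta_0)$ pointwise. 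Your interim-symmetry observation is correct but, as the paper itself notes, insufficient precisely because it averages over $\theta_0$; the upgrade to ex-post symmetry requires the reallocation-plus-continuity argument, not a relabeling one.
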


The proof of \Cref{lem:ex-post} is in the appendix. We sketch the proof here. As argued before, optimality already implies a symmetric ex-post allocation rule for the bidders. Thus, it suffices to show that for each $\theta_0$, the ex-post payment rule is also symmetric across bidders.\footnote{Note that the Envelope theorem would imply the interim payment rule averaging over both other bidders' types \textit{and} the seller's types is symmetric across bidders. However, this would not be sufficient for our purposes since it averages over the seller's types.} The proof leverages credibility in multiple ways. First, because of credibility, as in the proof of \Cref{lem:psb}, the payment $b_i$ of any bidder $i$ upon winning can only depend on their type $\theta_i$ and the information set $I_i$ they receive (since the mechanism is static, recall that each bidder only receives one information set but that information set can depend on the seller's type $\theta_0$). Now using the BIC constraints for the bidder after receiving information set $I_i$, this implies that the bidder's ``bid'' $b_i(\theta_i, I_i)$ must be non-decreasing and continuous in $\theta_i$ by the Envelope theorem (given the Myersonian allocation rule by optimality). As a notation, write $b_i(\theta_i, \theta_0) = b_i(\theta_i, I_i(\theta_0))$. 

Credibility also implies that at every event that bidder $i$ gets the object, we must have 
\begin{equation}
    b_i(\theta_i, \theta_0) \geq \max\big\{b_j(\theta_j, \theta_0), \theta_0\big\}\,,\label{eq:max}
\end{equation}
for all other bidders $j \neq i$.  Indeed, suppose not. Then, either $\theta_0 > b_i(\theta_i, \theta_0)$ or $b_j(\theta_j, \theta_0) > b_i(\theta_i, \theta_0)$. In the first case, the seller can keep the object and claim to every bidder that there exists another bidder with a strictly higher type (recall that we have the Myersonian allocation by optimality). In the second case, the seller can allocate the object to bidder $j$, and claim to every bidder that bidder $j$ has the highest type. By credibility, neither deviation can be profitable and hence we have \eqref{eq:max}. 

For any $s \in [0, 1]$, if $s < r(\theta_0)$, then the Myersonian allocation implies that type-$s$ of bidder $i$ has zero probability of getting the object, and hence without loss $b_i(s, \theta_0)$ can be set to $0$. Now, we claim that for any two bidders $i, j$, we have 
\begin{equation}
    \max\big\{b_i(s, \theta_0), \theta_0\big\}  = \max\big\{b_j(s, \theta_0), \theta_0\big\} \,. \label{eq:sym}
\end{equation}
Indeed, suppose not. Then, without loss, we have $b_i(s, \theta_0) >  \max\{b_j(s, \theta_0), \theta_0\}$ at some $(s, \theta_0)$. Since $b_i(s, \theta_0) > \theta_0$, by construction, we have that $s \geq r(\theta_0)$. By continuity of $b_j(\,\cdot\,, \theta_0)$, there exists some sufficiently small $\varepsilon > 0$ such that 
\[b_i(s, \theta_0) >  \max\Big\{b_j(s + \varepsilon, \theta_0), \theta_0\Big\}\,.\]
Consider the type profile where bidder $i$ has type $s$, bidder $j$ has type $s + \varepsilon$, and all other bidders have type $0$. Then, given $r(\theta_0) \leq s < s+ \varepsilon$, the Myersonian allocation rule must allocate the object to bidder $j$, but that contradicts credibility by \eqref{eq:max}. 

Finally, for any $s \in [0, 1]$, if $s \geq r(\theta_0)$, then we have $b_i(s, \theta_0) \geq \theta_0$, because otherwise at the type profile $(s, 0, \dots, 0)$ where bidder $i$ has type $s$, the seller would not sell the object by credibility, contradicting the Myersonian allocation. Then, it follows immediately from \eqref{eq:sym} that for any bidders $i$ and $j$, and for any $s \in [0, 1]$, we have $b_i(s, \theta_0) = b_j(s, \theta_0)$, concluding the proof of \Cref{lem:ex-post}.  

\textbf{Step 3.} Given \Cref{lem:winner-paying} and \Cref{lem:ex-post}, we can now apply \Cref{thm:char} to finish the proof of \Cref{thm:static}. Indeed, applying \Cref{thm:char} shows that there exists a mechanism $(M', S'_N)$ that is a first-price auction with a walk-away option and a public bid space such that $(M', S'_N)$ is outcome-equivalent to $(M, S_N)$. In particular, $(M', S'_N)$ must be optimal. We argue for a contradiction.

Let $Q_i(\theta_i)$ be the symmetric interim allocation rule (averaging over the type profiles of bidders $j \neq i$ and the seller's cost types). Since the ex post allocation rule coincides with the Myersonian allocation rule, we must have $Q_i(\theta_i) = 0$ for all $\theta_i < r(0)$ and $Q_i(\theta_i) > 0$ for all $\theta_i > r(0)$. Then, given that $(M', S'_N)$ is optimal and has a public bid space, the Envelope theorem implies that there exists a symmetric bidding function $b(\theta_i)$ (independent of the cost type $\theta_0$) such that $b(\theta_i) > 0$ for all $\theta_i > r(0)$, and $b(\theta_i) = 0$ for all $\theta_i < r(0)$ (i.e., these types decline to bid). Indeed, the Envelope theorem implies that for any $\theta_i > r(0)$, we have
\[\theta_i - b(\theta_i) = \frac{\int^{\theta_i}_{r(0)} Q_i(s) \d s}{Q_i(\theta_i)}\leq \frac{(\theta_i - r(0)) Q_i(\theta_i) }{Q_i(\theta_i)} = \theta_i - r(0)\,, \]
where the inequality uses that $Q_i(\,\cdot\,)$ is non-decreasing. Therefore, for any $\theta_i > r(0)$, 
\[b(\theta_i) \geq r(0) > 0\,.\]
Then, for any cost type $\theta_0$ such that 
\[0 < \theta_0 < r(0) \,,\]
whenever there is any bid submitted which must be weakly above $r(0)$, the seller of type $\theta_0$ cannot walk away by credibility. Therefore, for any cost types $\theta_0, \theta'_0 \in [0, r(0))$, we have that the expected probability of trade (averaging over all bidder types) must coincide: 
\[\P_{\theta_N}\big(\tilde{y}(\theta_N, \theta_0) \neq 0\big) = \P_{\theta_N}\big(\tilde{y}(\theta_N, \theta'_0) \neq 0\big)\,.\]
But then $(M', S'_N)$ cannot be optimal since this cannot hold under the Myersonian allocation rule---a contradiction. This completes the proof of \Cref{thm:static}.

\subsection{Maximizing Profits among Credible Static Mechanisms}\label{subsec:max}
Given the impossibility theorem (\Cref{thm:static}), we cannot achieve the Myersonian revenue with credible static mechanisms. Now, we explore how the profit-maximizing mechanism would look like when the seller is restricted to credible static mechanisms. Under symmetry and winner-paying, \Cref{thm:char} asserts that, if the auctioneer uses credible static auctions, then the only degree of freedom that the auctioneer has is the public bid space $B$. The walk-away option by the seller can also be thought of as having a secret reserve price. In particular, \Cref{thm:char} reduces the problem of maximizing profits in this class of mechanisms to simply designing the public bid space $B$. 

The next result shows that using an actual auction with a bid space $|B| > 1$ can always generate more profit than using a posted price ($|B| = 1$): 

\begin{prop}\label{prop:auction}
For any $(M, S_N)$ that maximizes the expected profit among symmetric, winner-paying, credible, static mechanisms, $(M, S_N)$ is outcome-equivalent to a first-price auction with a walk-away option and a public bid space $B$ where $B \supseteq \{b_0, b_1\}$ for some $b_0 \neq b_1 \in (0, 1)$. 
\end{prop}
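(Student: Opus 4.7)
The plan is to invoke \Cref{thm:char} to reduce the problem to choosing the public bid space $B$, and then to show that any $B$ that behaves effectively as a posted price is strictly dominated by adding a second, nearby bid. By \Cref{thm:char}, any profit-maximizer in the class is outcome-equivalent to a first-price auction with a walk-away option (triggered when the maximum bid is weakly below the seller's cost) and a public bid space $B \subseteq \R_{\geq 0}$. It suffices to argue that the ``effective'' bid space $B \cap (0,1)$ contains at least two distinct elements in any optimum, which also handles the harmless boundary bids at $0$ and $1$: bids above $1$ are never submitted, bids of exactly $1$ are used only by a null set of types and contribute nothing to revenue, and a bid of $0$ is equivalent to declining under the walk-away rule, so dropping them from $B$ leaves outcomes unchanged. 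The core step is therefore to show that any posted price $B = \{p\}$ with $p \in (0,1)$ is strictly suboptimal.

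Consider augmenting such a $B$ with a higher bid $p' = p + \delta$ for small $\delta > 0$, so $B' = \{p, p'\}$. The symmetric equilibrium of the resulting first-price auction has a threshold $\theta^*(\delta) \in (p,1)$: bidders with $\theta_i < p$ decline, bidders with $\theta_i \in [p, \theta^*)$ bid $p$, and bidders with $\theta_i \geq \theta^*$ bid $p'$. The threshold is determined by the indifference condition $(\theta^* - p)\,Q(p) = (\theta^* - p')\,Q(p')$, where, under uniform tie-breaking, a direct calculation yields
\[Q(p) = \frac{F(\theta^*)^{|N|} - F(p)^{|N|}}{|N|\,\bigl(F(\theta^*) - F(p)\bigr)}, \qquad Q(p') = \frac{1 - F(\theta^*)^{|N|}}{|N|\,(1 - F(\theta^*))}.\]
One checks that $Q(p') > Q(p)$ for every $p \in (0,1)$ (the inequality reduces to $\sum_{k=0}^{|N|-1} F(p)^k > |N|\,F(p)^{|N|-1}$), so the threshold is well-defined and $\theta^*(\delta) \downarrow p$ as $\delta \downarrow 0$; in particular, the equilibrium exists for all sufficiently small $\delta > 0$.

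Finally, let $g$ and $G$ denote the seller's cost density and CDF, and set $A(q) := \int_0^q (q - \theta_0)\,g(\theta_0)\,d\theta_0$, the seller's expected surplus conditional on trading at price $q$. The walk-away rule then gives
\[R(\{p\}) = \bigl(1 - F(p)^{|N|}\bigr)\,A(p), \quad R(\{p,p'\}) = \bigl(1 - F(\theta^*)^{|N|}\bigr)\,A(p') + \bigl(F(\theta^*)^{|N|} - F(p)^{|N|}\bigr)\,A(p),\]
and the two cost-type terms telescope upon subtracting, yielding
\[R(\{p,p'\}) - R(\{p\}) = \bigl(1 - F(\theta^*(\delta))^{|N|}\bigr)\bigl(A(p') - A(p)\bigr) = \bigl(1 - F(p)^{|N|}\bigr)\,G(p)\,\delta + o(\delta) > 0\]
for sufficiently small $\delta$, since $A(p') - A(p) = \delta\,G(p) + O(\delta^2)$ and $p \in (0,1)$ makes both factors strictly positive. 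This contradicts the optimality of $B = \{p\}$ and, combined with the reduction in the first paragraph, completes the argument. The main obstacle is the equilibrium analysis used to pin down $\theta^*(\delta)$---establishing existence, uniqueness, and the limit $\theta^*(\delta) \to p$ as $\delta \to 0$; once this is in hand, the revenue difference collapses cleanly because the new trade probability at cost types in $[p,p')$ and the higher price at cost types below $p$ combine into the single compact factor above.
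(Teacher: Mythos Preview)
Your argument is essentially correct and arrives at the same destination as the paper, but there is one oversight in the equilibrium step: the indifference condition you write omits the seller's walk-away probability. A bidder who bids $p$ trades only when the seller's cost is below $p$, so the correct condition is
\[(\theta^* - p)\,G(p)\,Q(p) \;=\; (\theta^* - p')\,G(p')\,Q(p')\,,\]
with your formulas for $Q(p)$ and $Q(p')$ unchanged. This does not break the proof---$G(p')/G(p)\to 1$ as $\delta\to 0$, so $\theta^*(\delta)\to p$ still holds, and more importantly your revenue identity
\[R(\{p,p'\}) - R(\{p\}) \;=\; \bigl(1 - F(\theta^*)^{|N|}\bigr)\bigl(A(p') - A(p)\bigr)\]
depends only on the threshold, not on how it was computed. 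Since $A'(q)=G(q)>0$, this difference is strictly positive for \emph{any} equilibrium threshold $\theta^*<1$, so the first-order expansion in $\delta$ is in fact unnecessary once you know such an equilibrium exists.

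The paper takes a shorter route that sidesteps the explicit threshold calculation. It invokes \citet{reny2011existence} for existence of a symmetric equilibrium with bids $\{b_0,b_1\}$, observes that every participating type bids at least $b_0$, and then compares profits \emph{realization by realization}:
\[\1_{\max_i\theta_i\geq b_0}\max\{b_0-\theta_0,0\}\;\leq\;\1_{\max_i\theta_i\geq b_0}\max\{b(\max_i\theta_i)-\theta_0,0\}\,,\]
with strict inequality on a positive-measure set once some types bid $b_1$. This avoids computing $Q(p)$, $Q(p')$, or the indifference condition altogether. Your computation buys a quantitative handle on the improvement (the clean factor $(1-F(\theta^*)^{|N|})(A(p')-A(p))$ is nice), whereas the paper's pathwise dominance is more robust and requires less structure on the two-bid game.
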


\begin{proof}[Proof of \Cref{prop:auction}]
By \Cref{thm:char}, any such protocol $(M, S_N)$ is outcome-equivalent to a first-price auction with a walk-away option and a public bid space $B \subseteq [0, 1]$ in which the seller walks away if and only if the maximal bid is below the cost.

Suppose that we use a posted price mechanism, i.e., $|B| = 1$, in particular, $B = \{b_0\}$ for some $b_0 \in (0, 1)$. We construct an improvement. Let $\tilde{B}=\{b_0, b_1\}$ for some $b_1 \in (b_0, 1)$. 
For any such $\tilde{B}$, by \citet{reny2011existence}, there exists a symmetric pure-strategy bidding equilibrium $b(\,\cdot\,)$ in the first-price auction with a walk-away option. In particular, for any bidder $i$ and type $\theta_i > b_0$, that type will participate and submit some bid $b(\theta_i) \geq b_0$. Moreover, note that for $b_1$ sufficiently close to $b_0$, it must be that for a positive measure of types $\theta_i$, we have $b(\theta_i) > b_0$ in any symmetric equilibrium.\footnote{Indeed, it suffices to take any $b_1$ such that $b_0 < b_1 < 1 - (1 - b_0) \alpha$, where $\alpha \in (0, 1)$ is the interim probability of winning for a given bidder when bidding at $b_0$ conditional on the seller's cost type $\theta_0 \leq b_0$ (assuming other bidders following the strategy of biddng at $b_0$ if and only if their types $\theta_i \geq b_0$).} Fix any such $\tilde{B}$. Since the seller has the option to walk away, the expected profit must be increased because realization by realization, the profit is increased: 
\[\1_{\max_i \theta_i \geq b_0} \max\Big\{b_0 - \theta_0, 0\Big\} \leq \1_{\max_i \theta_i \geq b_0} \max\Big\{b\big(\max_i \theta_i\big) - \theta_0, 0\Big\}\,,\]
where the inequality is strict for a positive-measure set of type profiles.  Hence, the original mechanism cannot attain the maximal expected profit. Since this holds for any posted price mechanism, the result follows. 
\end{proof}

However, which public bid space $B$ is optimal for the auctioneer generally depends on the details of the environment. If the auctioneer never walks away upon observing the bids, then it is easy to see that the optimal bid space $B$ is given by an interval $[R^*, \infty)$, where $R^*$ is the Myersonian reserve with respect to the average cost $\E[\theta_0]$.\footnote{To see this, note that for any interim allocation rule where $Q(\theta_i;\theta_0) = Q(\theta_i)$, we can write the expected profit as $|N| \cdot \int_0^1 Q(\theta_i)\big(\text{MR}(\theta_i) - \theta_0\big) \d F(\theta_i) \d G(\theta_0)  = |N| \cdot \int_0^1 Q(\theta_i)\big(\text{MR}(\theta_i) - \E[\theta_0]\big) \d F(\theta_i)$, where $F$ is the CDF for $\theta_i$, $G$ is the CDF for $\theta_0$, and $\text{MR}(\theta_i)$ is the Myersonian virtual value. The regularity of the value distribution then implies that the optimal interim allocation rule is given by the ``assortative matching'' rule up to a threshold type defined by $\text{MR}(\theta_i) = \E[\theta_0]$. } However, perhaps surprisingly, the next example shows that the seller can benefit from restricting bids in a way to induce some pooling over bidder types and then walking away sometimes: 

\begin{ex}\label{ex:gap}
Suppose that we have two bidders and the values $\theta_i \sim U[0, 1]$. Suppose that the seller's cost $\theta_0$ is either $0$ or $0.5$ with equal probabilities. The Myersonian reserve for the average cost type $\E[\theta_0] = 0.25$ is given by $R^* = (1 + 0.25) / 2 = 0.625$. The expected profit for the seller using the first-price auction with the public bid space $[0.625, \infty)$ is 
\[2 \cdot \int_{0.625}^1 \big(\text{MR}(\theta_i) -0.25\big) \cdot Q(\theta_i) \d \theta_i  = 2 \cdot \int_{0.625}^1 \big(2\theta_i - 1.25\big) \cdot \theta_i \d \theta_i \approx 0.246\,.\]
However, consider a first-price auction with a public bid space 
\[B = \{0.5\} \cup [0.625, \infty)\]
where the seller of type $\theta_0 = 0$ never walks away and the seller of type $\theta_0 = 0.5$ walks away for the bid $b = 0.5$. The equilibrium of this auction is characterized by a threshold type $\theta^*$ who is indifferent between bidding $0.5$ and bidding $0.625$, i.e., 
\[\big(\theta^* - 0.5\big)  \cdot \frac{1}{2} \cdot \big(\frac{1}{2} \cdot (\theta^* - 0.5) + 1 \cdot 0.5 \big)  =  \big(\theta^*  - 0.625\big) \cdot 1 \cdot \theta^*\,.\]
Then, we have $\theta^* \approx 0.717$. The expected profit for the seller is then given by 
\[2 \cdot \int_{0.717}^1 \big(\text{MR}(\theta_i) -0.25\big) \cdot \theta_i \d \theta_i + 2 \cdot \frac{1}{2} \cdot \int_{0.5}^{0.717} \big(\text{MR}(\theta_i) - 0\big) \cdot \big( \frac{1}{2} \cdot (0.717 - 0.5) + 1 \cdot 0.5\big) \d \theta_i \approx 0.263\,,\]
which is strictly higher than the seller's profit under the bid space $[0.625, \infty)$. \hfill \qed 
\end{ex}

\begin{rmk}
In \Cref{ex:gap}, the optimal bid space must be restricted to induce some pooling of bidder types: If the bid space is of the form $[R, \infty)$, where $R \leq 0.5$, then the seller can improve it by using $[0.5, \infty)$ (since $\text{MR}(\theta_i) \leq 0$ for all $\theta_i \leq 0.5$), but then it can be further improved to be $[R^*, \infty)$ where $R^* = 0.625$ as argued above. But that is dominated by using the bid space $\{0.5\} \cup [0.625, \infty)$ which induces a bunching of bidders at the bid $0.5$. Intuitively, the seller introduces pooling at a lower bid to allow herself to walk away credibly when her cost is realized to be higher than the maximal virtual value. With full commitment, the seller can always do that using cost-dependent ex ante reserve prices, but as we show, credibility eliminates any ex ante dependency.     
\end{rmk}

\section{Credibility of Dynamic Mechanisms}\label{sec:Dynamic}

Given the impossibility result (\Cref{thm:static}), to maintain credibility, an informed seller must give up the possibility of either having a static auction or an optimal auction. 

Our third result shows that if the auctioneer can use dynamic mechanisms, then she can still obtain the optimal profit using a credible mechanism. In particular, she can do so by running an ascending (English) auction. 

We say that a protocol $(M, S_N)$ is an \textit{\textbf{ascending auction}} if $\Big(M(\theta_0), S_N\mid_{\mathcal{I}_N(\theta_0)}\Big)$ is an ascending auction as defined in \citet{akbarpour2020credible},\footnote{See Definition 14 in \citet{akbarpour2020credible}. For details, see \Cref{app:ascending}. } for each auctioneer type  $\theta_0$. In particular, for each $\theta_0$, the bid space of agent $i$ is just $\Theta_i$, which is assumed to be discrete. We assume that the auctioneer's type space is also discrete. The reserve prices are captured by the initial bid for each agent $i$. 

\begin{theorem}\label{thm:english}
There exists an optimal, credible mechanism. In particular, an optimal, ascending auction is credible. 
\end{theorem}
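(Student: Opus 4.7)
The plan is to establish two claims: (i) the ascending auction with cost-dependent Myersonian reserve $r(\theta_0)$ implements the optimal mechanism, and (ii) this protocol is credible.

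For (i), I would define $M(\theta_0)$ as the ascending auction of \citet{akbarpour2020credible} (their Definition 14) with initial bid $r(\theta_0)$ for every bidder, paired with the truthful strategy profile $S_N$ under which each bidder accepts any quoted price weakly below their value and declines otherwise. Standard analysis of the ascending auction with reserve then gives that the highest-value bidder wins whenever their value weakly exceeds $r(\theta_0)$ and pays $\max\{r(\theta_0),\ \text{second-highest value}\}$; this coincides with the Myersonian allocation and payment rule, yielding the seller the Myersonian optimal expected revenue for each cost type.

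For (ii), I would fix any safe deviation $\hat{S}_0$ and show it is not strictly profitable. The key observation is that truthful bidding is a \emph{weakly dominant} strategy in any ascending auction, irrespective of the seller's behavior: accepting a quote strictly above one's value risks winning at a loss, while declining a quote at or below one's value forfeits a surplus-generating win, and each consequence depends only on the bidder's own choice. Because a safe deviation keeps each bidder's information set consistent with some by-the-book execution, the prescribed action in $S_N$ is well-defined at every reached node and remains optimal by dominance. The pair $(\hat{S}_0, S_N)$ therefore induces, for each $\theta_0$, an auxiliary direct mechanism $\widetilde{M}(\,\cdot\,;\theta_0)$ from bidder types to outcomes that is dominant-strategy incentive-compatible and individually rational (always declining secures zero payoff). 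Myerson's theorem then bounds the seller's expected profit under $\widetilde{M}(\,\cdot\,;\theta_0)$ by the Myersonian optimum for cost $\theta_0$, which by-the-book already attains. Hence $\hat{S}_0$ yields at most the Myersonian optimum in expectation and cannot strictly beat by-the-book. This matches the informal intuition in the introduction: even after observing that bidder 1 is willing to go up to $b_1 > \theta_0$, the seller's best continuation is exactly the ascending-auction prescription (ask bidder 2 to beat $b_1$), because any alternative induces a BIC, IR mechanism whose revenue is bounded by Myerson.

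The main obstacle is formalizing that each bidder's action is both defined and optimal at every information set reached under $\hat{S}_0$. This leverages the safety property: every reached node lies on a by-the-book path for some $(\hat{\theta}_0, \hat{\theta}_{-i})$, so $S_N$ prescribes an action there, and weak dominance of truthful bidding does not depend on the bidder's beliefs about $\hat{\theta}_0$ or $\hat{\theta}_{-i}$, so the prescribed action remains a best response. A secondary subtlety is randomization in the seller's deviation and in the mechanism's allocation: one applies the Myersonian bound realization-by-realization and then averages, which does not improve on the deterministic Myersonian benchmark.
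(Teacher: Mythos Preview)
Your proposal is correct and follows essentially the same approach as the paper: both arguments hinge on the observation that truthful bidding remains a best response under any safe auctioneer deviation (the paper invokes Lemma~3 of \citet{akbarpour2020credible} for this), so the deviation induces an incentive-compatible, individually rational mechanism whose profit is bounded by the Myersonian optimum that by-the-book already attains. The only cosmetic difference is that you apply Myerson's bound type-by-type in $\theta_0$, whereas the paper argues by contradiction with ex-ante optimality after replacing the by-the-book strategy only at the profitably deviating type $\theta_0^*$; your framing is slightly more direct and sidesteps the paper's appeal to finiteness of the type space.
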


\begin{proof}[Proof of \Cref{thm:english}]
We use a similar argument as in \citet{akbarpour2020credible}. Note that, in an ascending auction, the misrepresentation of the auctioneer's type and any opponent's type would not change the optimal strategy used by bidder $i$. Specifically, suppose that $(M, S_N)$ is an optimal, ascending auction. Then for each $i$, $S_i$ specifies the strategy of quitting the auction if and only if the running bid is above agent $i$'s type $\theta_i$. Let $S'_0$ be any safe deviation. Note that the bidder's strategy $S_i$ is also a best response to $(S'_0, S_{-i})$.\footnote{See Lemma 3 of \citet{akbarpour2020credible}. In \Cref{app:ascending}, we also formally prove a generalization of their Lemma 3 and use it to show ``strong credibility'' of the optimal ascending auction (see \Cref{subsec:strong}).} 

Now suppose for contradiction that there exists a safe deviation $S'_0$ that is profitable for some auctioneer type $\theta^*_0$. Then consider the auctioneer's strategy $\hat{S}_0$ defined by playing according to $S'_0(\theta^*_0)$ if the auctioneer's type is $\theta^*_0$, and playing according to $S_0(\theta_0)$ otherwise. Clearly, $\hat{S}_0$ is also a safe deviation. Then the strategy profile $(\hat{S}_0, S_N)$ would induce a BIC mechanism as argued above. But this implies that we have found a BIC mechanism that yields a strictly higher expected profit than $(M, S_N)$ when averaging over the seller's types $\theta_0$ (recall that we have finite types), contradicting that $(M, S_N)$ is optimal. 
\end{proof}

\begin{rmk}
While the proof of \Cref{thm:english} is concise, it is perhaps surprising that our deviation for the first-price auction that involves the manipulation of the reserve prices no longer works for the ascending auctions. To understand the intuition, suppose that we have two bidders. Suppose that the seller knows that bidder $1$ has stayed in the ascending auction for a while with the clock rising to some $b_1$ higher than the seller's cost. Even though it might be tempting to conclude that the seller should treat $b_1$ as the new cost to set the reserve price for bidder $2$ (assuming that the seller has not called bidder $2$ yet), it is \textit{not} optimal to do so. This is because the seller can go back to bidder $1$ in the ascending auction. The optimal thing is to ask bidder $2$ if he is willing to beat $b_1$ and then run an auction between him and bidder $1$. If the seller could not go back to bidder $1$, as in the first-price auction, then she would have reason to inflate the winning bid. But here the seller always has an incentive to go back, and even after such a deviation, she intends to continue \textit{as if it were an ascending auction}. 
\end{rmk}

As the above discussion suggests, the credibility of the English auction in our setting relies on a few key features of the English auction, which may not be shared by other dynamic mechanisms, e.g., the Dutch (descending) auction. The next example shows that, contrary to the case where the auctioneer's cost is publicly observable, the optimal Dutch auction may not be credible: 
\begin{ex}[Optimal Dutch auction is not credible]\label{ex:dutch}There are two cost types $c_{1}=0,c_{2}=0.7$. There are two value types $v_{1}=1,v_{2}=2$ with equal probabilities. The Myersonian reserves are $r(c_{1})=1,r(c_{2})=2$. For a fixed set of bidders $N$, the bidding function is 
\[
\quad b(v_{1};r=1)=1,\qquad b(v_{2};r=1)=k_{|N|}
\]
\[
b(v_{1};r=2)=0,\qquad b(v_{2};r=2)=2
\]
where $k_{|N|}\rightarrow2$ as $|N|\rightarrow\infty$. Fix any $n$ such that $k_{n}\geq 1.8$.  Suppose that we have $n$ bidders. Consider the cost type $c_{2}$. Consider the event that the first $n-1$ bidders stay out of the
auction given the reserve $2$. Now, consider the safe deviation that,
when facing the last bidder, the auctioneer pretends to be of the cost type
$c_{1}$ and sets the reserve price $r=1$. The last bidder will bid $1$
if of low type and $k_{n}$ if of high type, which means that the auctioneer
gets a profit 
\[
1\times\frac{1}{2}+k_{n}\times\frac{1}{2}-0.7\geq1.4-0.7=0.7\,.
\]
If playing by the book and setting $r=2$ for the last bidder, the
auctioneer only gets $\frac{1}{2}\times(2-0.7)=0.65<0.7$. Hence, this
deviation is profitable, and the Dutch auction is not credible. \hfill \qed
\end{ex}

\begin{rmk}
The deviation in \Cref{ex:dutch} is exactly the downward deviation used in the proof of \Cref{thm:fpa}. Indeed, our upward deviation for the first-price auction cannot work for the Dutch auction because the clock is running down---when bidder $1$ sends a message ``I will take it at the current price'', it is already too late for the seller to use the bid to update the reserve for bidder $2$. Instead, the deviation in \Cref{ex:dutch} considers the event where the previous bidders have values lower than the Myersonian reserve, and then reduces the reserve faced by the last bidder. Importantly, as in the proof of \Cref{thm:fpa}, the last bidder does not know that he is the only one left---otherwise, the seller would optimally set the same reserve price regardless of the number of bidders (\citealt{Myerson1981}). In equilibrium, once the bidders anticipate this deviation, they will shade their bids even more, eventually leading to a profit loss for the seller due to her inability to commit. 
\end{rmk}

\section{Discussion}\label{sec:discussion}

\subsection{Public Announcements}\label{subsec:announcement}

So far we have assumed that the seller communicates with each of the bidders privately. One of the institutions used by auctioneers in some markets is public communication and such institutions could help with the credibility of auctions as we now discuss.

First, an obvious solution to the problem of credibility of the first-price auction with an informed seller is for the seller to commit to announcing publicly the reserve price, before communicating with any of the bidders. This kind of public auction would make first-price auctions with optimal cost-dependent reserves credible. 

Second, even if the seller could not announce publicly the reserve, even just announcing publicly the moment all the bids have to be submitted could make the first-price auction credible. For example, the seller would first publicly announce that they will only accept bids on Sunday at noon (as is the case in some real estate sales). Then, the seller could communicate privately with every bidder the reserve price, without collecting any information about bidder valuations or their willingness to bid above the reserve. Since the seller would not collect responses from bidders before Sunday’s noon auction, there would be no safe profitable deviation for the seller from the optimal first-price auction. 

Third, public announcements could make many other auction formats credible as well. For example, in the case of the sealed-bid second-price auction, public communication in the form of revealing after the auction all bids and the identities of the bidders could make the second-price auction credible. (This is a common practice in the spectrum auctions run by the FCC in the US and by Industry Canada in Canada---there, while identities of bidders are private during the auction, all bids and bidders are revealed publicly after the auction, allowing bidders to verify that the complex auction rules, such as computing VCG payments and ``core adjustment'' have been followed.) With such public announcements, unless the auctioneer can “invent” bidders, second-price sealed-bid auctions would be credible. 

In summary, the institution of public announcements that is often used in practice can go a long way toward helping sellers design credible and revenue-maximizing auctions. That said, public announcements of the kind described above are not always feasible, practical, or without cost. First, the seller may prefer to keep the very existence of an auction secret, and contact only a small number of traders. Second, bidders, especially losing bidders, may prefer to have their valuations or even participation in the auction secret. Third, the time when the good becomes available for an auction may be random and privately observed by the seller. Especially when combined with the desire to keep the auction private, this could undermine strategies like “all bids are only accepted at noon.” Finally, public communication can be costly.

\subsection{Strong Credibility of Ascending Auctions}\label{subsec:strong}

So far we have assumed that the only credibility constraint is that the seller does not have any safe deviation that is profitable. We did not consider other types of deviations. However, there may be other deviations that could create problems for running an auction mechanism credibly. 

One such deviation is the possibility that the seller could approach one of the bidders and offer them a secret deviation to a different mechanism that is ex-ante beneficial to both parties. Despite this not being a safe deviation, we may nevertheless be concerned about the credibility of such an auction.\footnote{One defense against such unsafe deviations is that even if they are beneficial to the bidder in this auction, they can be detrimental to that bidder in the future or more generally undermine the credibility of that seller in the eyes of that buyer, leading to long-term losses. However, the analysis of credibility of auctions in repeated environments is beyond the scope of this paper.} Such deviations may be tempting to the seller even in the presence of public announcements: while public announcements can be verified, it may be impossible for bidders to verify that the seller has not contacted any other bidders ahead of time. 

Formally, an auctioneer-bidder joint deviation strategy $(\hat{S}_0, \hat{S}_j)$ is \textit{\textbf{safe}} if, for all agents $i \in N$, $i \neq j$, and for all type profiles $(\theta_0, \theta_N)$, there exists a pair $(\hat{\theta}_0, \hat{\theta}_{-i})$ such that 
\[o_i\big(\hat{S}_0, \hat{S}_j,  S_{-j}, \theta_0, \theta_N \big) = o_i\big(S_0, S_N, \hat{\theta}_0, (\theta_i, \hat{\theta}_{-i})\big)\,.\]

A joint deviation $(\hat{S}_0, \hat{S}_j)$ is  \textit{\textbf{mutually beneficial}} for auctioneer type $\theta^*_0$ and bidder $j$ if:
\begin{equation}
  \E_{\theta_N}\Big[u_0(S_0(\theta^*_0), S_N, \theta^*_0, \theta_N)\Big] < \E_{\theta_N}\Big[u_0(\hat{S}_0(\theta^*_0), \hat{S}_j, S_{-j}, \theta^*_0, \theta_N)\Big]\,,  \label{eq:strict}
\end{equation}
and for all $\theta_j$, playing according to $\hat{S}_j(\theta_j)$ is incentive compatible given $\big(\hat{S}_0(\theta^*_0), S_{-j}\big)$ and:
\begin{equation}
\sup_{\theta'_0} \Bigg\{\E_{\theta_{-j}}\Big[u_j(S_0(\theta'_0), S_N, \theta_N)\Big]\Bigg\} \leq \E_{\theta_{-j}}\Big[u_j(\hat{S}_0(\theta^*_0), \hat{S}_j, S_{-j},  \theta_N)\Big]\,. \label{eq:sup}
\end{equation}

We say that a protocol $(M, S_N)$ is \textit{\textbf{strongly credible}} if it is credible and for every bidder $j$ and every seller type $\theta_0$, there is no mutually beneficial safe joint deviation.

Admittedly, this is not the only possible definition of immunity of auction rules to secret side-deals between the auctioneer and a bidder. Our definition is motivated by an implicit assumption that the seller can approach at most one of the bidders and the secret deal is known to be offered before the auctioneer receives messages from any bidder (which is why in equations (\ref{eq:strict}) and (\ref{eq:sup}) the bidder and the auctioneer evaluate payoffs assuming that no other bidder changes their behavior and in case of a rejection, the approached bidder plays according to the original mechanism). We are focusing on such possible deviations because otherwise bidders would need to form beliefs about the messages the auctioneer received from others before that deviation, and their negative inferences about the messages of others could prevent them from accepting such side deals. Finally, note that our definition requires that the bidder finds the proposed deviation beneficial even if they assume that in case of a rejection of the side-deal the auctioneer's type will turn out to be the best for them (the sup on the left-hand side of (\ref{eq:sup})). 

We argue that the optimal English auction (with reserve prices that depend on the realized seller's cost) is strongly credible. 
However, the first-price auction may not be strongly credible even in the presence of public announcements. 

For the first claim, note that as we argued before, the optimal strategies of bidders do not depend on the seller's announcement about reserve prices (unless their value happens to be below the reserve price). Moreover, even if the seller learns something about the valuations of a subset of buyers, the optimal continuation mechanism is to revert to the English auction. See \Cref{app:ascending} for details. 

For the second claim (that even with public announcements, a first-price auction may not be strongly credible), return to the example from the introduction (\Cref{ex:fpa}). The seller with a public announcement would implement the optimal first-price auction in the following way: when they draw a cost of $0$, they would announce that the bidders can bid either $1$ or $\frac{5}{3}$. When they draw a cost of $0.7$, they would announce that only a bid of $2$ is allowed. If bidders believe that the seller will follow this mechanism, they would bid $b(1)=1$ and $b(2)=5/3$ when the reserve is low and $b(2)=2$ when the reserve is high. However, even with public announcements, this mechanism is not strongly credible: 

\begin{ex}[FPA with public announcements is not strongly credible] Consider the setting of \Cref{ex:fpa} and the public announcements as above. Consider the following deviation. Before the public announcement of the reserve, if the seller draws a low $\theta_0$, the seller would approach bidder $1$ and make them the following offer: You can bid either $1$ or $1.48$. If you bid $1$, I will announce publicly the reserve price of $1$. If you bid $1.48$, I will announce publicly a reserve price of $2$. If the other bidder beats you, you lose. However, if you bid $1.48$ and the other bidder does not meet the reserve price of $2$, I will secretly sell you the good at $1.48$ (in a private post-auction sale). This is clearly not a safe deviation, but we argue that it is mutually beneficial for buyer $1$ and the seller. 
First, suppose that the value of buyer $1$ is $2$. In the original mechanism, the buyer gets at most a profit of $(2-5/3) \times 3/4 = 1/4$ (they pay $5/3$ when they win and they win $3/4$ of the time in the case when the seller has a low $\theta_0$). In the new mechanism, they get an expected profit of \[(2-1.48) \times 0.5 > 1/4\,.\]
So that type prefers this secret deviation. Moreover, this joint deviation strategy is incentive compatible for the buyer: by misreporting that their value is $1$, the buyer can lower the reserve price to $1$, as in the original mechanism, but that yields the buyer a payoff of $1/4$, lower than the payoff from following the proposed strategy. 

Second, if the value of the buyer is $1$, we return to the original mechanism, and the maximum payoff the buyer can get is the same as what is offered. So for every type of buyer $1$, this is an improvement (and reporting the true value is incentive compatible in this case too). 

How about the seller? When buyer $1$ has a value $2$, the expected payoff in the original mechanism is $5/3$. In the new mechanism, it is $(1.48+2)/2 = 1.74 > 5/3$ (and when the value of buyer $1$ is $1$, the payoffs are unchanged).\footnote{If a reader is concerned that maybe this asymmetric mechanism is perhaps under commitment, better than the symmetric mechanism that we described before, note that the seller benefits only because they keep the deviation secret from bidder $2$: if bidder $2$ understood this deviation, then when the reserve they face is $1$, they would bid $1$ when their value is $2$, not $5/3$, and that would reduce the seller's expected revenue.} 
\hfill \qed 
\end{ex}

This example illustrates that even in the case that some announcements can be made publicly, there are still some deviations (albeit not fully safe) that could undermine the first-price auction.

\section{Conclusion}\label{sec:conclusion}

We study a seller with credibility concerns. We show that when the seller has private information about her cost, it is not possible to implement the optimal mechanism using a static mechanism. As we show, even the optimal first-price auction is no longer credible. We show that optimality requires a dynamic mechanism and that the English auction can be used to credibly implement the optimal mechanism. In contrast, we show that the Dutch auction may not be credible. For symmetric mechanisms in which only winners pay, we characterize all the static auctions that are credible: They are first-price auctions that depend only on the seller's cost ex post via a secret reserve, and may profitably pool bidders via a bid restriction. Our impossibility result highlights the role of public institutions and helps explain the use of dynamic mechanisms in informal auctions.

\newpage
\setlength\bibsep{12pt}
\bibliographystyle{ecta} 
\newpage
\bibliography{references}

\newpage
\appendix
\section{Omitted Proofs}

\subsection{Proof of \texorpdfstring{\Cref{lem:winner-paying}}{}}

We first claim the auctioneer's payoff for each cost type $\theta_0$ cannot exceed $1-\theta_0$. Note that credibility implies that the auctioneer of type $\theta_0$ has no incentive to mimic another type $\hat{\theta}_0$ when selecting which game $G$ to run. By the Envelope theorem, this implies that 
\[U(\theta_0) = \int^1_{\theta_0} Q(s) \d s + U(1)\,,\]
where $U(\,\cdot\,)$ is the equilibrium utility of the auctioneer and $Q(\,\cdot\,)$ is the probability of trade, which is pinned down by optimality. Therefore, the auctioneer's expected payoff is  
\[\E\big[U(\theta_0)\big] = \E\Big[\int^1_{\theta_0} Q(s) \d s\Big] + U(1)\,.\]
But under any optimal, first-price auction, the auctioneer's expected payoff is $\E\Big[\int^1_{\theta_0} Q(s) \d s\Big]$, and hence by optimality, we must have $U(1) = 0$. This immediately means that 
\[U(\theta_0) = \int^1_{\theta_0} Q(s) \d s \leq 1 - \theta_0\,.\]

Now, suppose for contradiction that there exists an optimal, credible, and static mechanism $(M, S_N)$ that is not winner-paying.  Then there exists some auctioneer type $\hat{\theta}_0$ and a positive-measure set of type profiles $\theta_N$ such that the type-$\hat{\theta}_0$ auctioneer gets strictly positive payment from some losing bidder $j \neq \tilde{y}(\theta_N, \hat{\theta}_0)$. Specifically, define $\mathcal{K} \subset \Theta_N$ as
\[\mathcal{K} := \Big\{\theta_N: \exists j \in N \text{ s.t. }  \tilde{y}(\theta_N, \hat{\theta}_0) \neq j \text{ and } \tilde{t}_j(\theta_N, \hat{\theta}_0) > 0 \Big\}\,,\]
which has positive measure. 
Let $\delta:= \E\Big[\1_{\theta_N \in \mathcal{K}} \sum_{j \neq  \tilde{y}(\theta_N, \hat{\theta}_0)} \tilde{t}_j(\theta_N, \hat{\theta}_0)\Big] > 0$ denote the expected payment the auctioneer of type $\hat{\theta}_0$ gets from the losing bidders.

Now, consider any auctioneer type $\theta_0 > 1 - \frac{1}{2}\delta$. By \textbf{Step 1(i)}, if playing by the book, then the auctioneer of type $\theta_0$ can get at most $1 - (1 - \frac{1}{2}\delta) = \frac{1}{2}\delta$. However, consider the following deviation by type $\theta_0$: 
\begin{itemize}
    \item Run the game $\hat{G} = M(\hat{\theta}_0)$.  
    \item If the outcome is such that no losing bidder pays the auctioneer, then tell every bidder $i$ that there exists some other bidder $j$ who played according to type $\hat{\theta}_j = 1$. 
    \item Otherwise, collect the payments from the losing bidders, and tell the winner that there exists some other bidder $j$ who played according to type $\hat{\theta}_j = 1$. 
\end{itemize}
Note that this deviation is safe. Moreover, by the optimality of $(M, S_N)$, when following this deviation, the auctioneer would keep the object almost surely and collect the positive payments from the losing bidders. Thus, the expected payoff from this deviation is at least $\delta > \frac{1}{2} \delta$. So $(M, S_N)$ cannot be credible. A contradiction.

\subsection{Proof of \texorpdfstring{\Cref{lem:ex-post}}{}}

Since $(M, S_N)$ is static, bidder $i$ has only one information set $I_i(\theta_0)$ that can depend on the seller's cost $\theta_0$. Let $\mathcal{I}^*_i:= \{I_i(\theta_0)\}_{\theta_0}$ be the possible information sets for bidder $i$ across seller's types $\theta_0$. We claim that for any bidder $i$, there exists $\tilde{b}_i(\theta_i, I_i)$ such that 
\[\tilde{t}_i(\theta_N, \theta_0) = b_i(\theta_i, \theta_0) = \tilde{b}_i(\theta_i, I_i(\theta_0))\]
for almost all types in $\Theta_N \times \Theta_0$ such that $\tilde{y}(\theta_N, \theta_0) = i$. Indeed, we can apply the argument in the proof of Theorem 1 in \citet{akbarpour2020credible} to the ``opponent type profile'' $(\tilde{\theta}_{-i}, \tilde{\theta}_0)$ for all $\tilde{\theta}_0$ in the same information set $I_i \in \mathcal{I}^*_i$. Intuitively, at any type profile $(\theta_N, \theta_0)$ such that $\tilde{y}(\theta_N, \theta_0) = i$, the seller has a safe deviation of charging bidder $i$ 
\[\sup_{\tilde{\theta}_{-i};\text{ } \tilde{\theta}_{0} \in I_i(\theta_0)} \tilde{t}_i(\theta_i, \tilde{\theta}_{-i}, \tilde{\theta}_0)\]
which cannot be detected by bidder $i$ (the measurability issue of such a deviation can be handled in the same way as in the proof of Theorem 1 in \citealt{akbarpour2020credible}). 

Now, let 
\[Q_i(\theta_i, I_i):= \E\big[Q_i(\theta_i, \theta_0)\mid I_i(\theta_0) = I_i \big]\,.\]
The BIC constraint for bidder $i$ implies that for all $I_i \in \mathcal{I}^*_i$, we have 
\[Q_i(\theta_i, I_i) \theta_i - Q_i(\theta_i, I_i) \tilde{b}_i(\theta_i, I_i)\geq Q_i(\hat{\theta}_i, I_i) \theta_i - Q_i(\hat{\theta}_i, I_i) \tilde{b}_i(\hat{\theta}_i, I_i)\,,\]
for all $\theta_i, \hat{\theta}_i$. This implies that $\tilde{b}_i(\theta_i, I_i)$ is non-decreasing in $\theta_i$ (by the Envelope theorem for any fixed $I_i$). Therefore, $b_i(\theta_i, \theta_0)$ is non-decreasing in $\theta_i$ for all $\theta_0$. Moreover, by the Envelope theorem, $b_i(\theta_i, \theta_0)$ must be continuous in $\theta_i$ at all $\theta_i, \theta_0$ where $Q_i(\theta_i, I_i(\theta_0)) > 0$ (given the Myersonian allocation from optimality). Moreover, when $Q_i(\theta_i, I_i(\theta_0)) = 0$, we can always let $b_i(\theta_i, \theta_0)$ be defined as $0$ (recall that by \Cref{lem:winner-paying}, the mechanism must be winner-paying). 

By credibility, it must be that if $\tilde{y}(\theta_N, \theta_0) = i$, then 
\[b_i(\theta_i, \theta_0) \geq \max\Big\{\theta_0, \max_{j\neq i} b_j(\theta_j, \theta_0)\Big\}\,,\]
because otherwise we have 
\[b_i(\theta_i, \theta_0) < \max\Big\{\theta_0, \max_{j\neq i} b_j(\theta_j, \theta_0)\Big\}\,,\]
and hence the seller has a profitable deviation by either walking away and claiming to every bidder that they are outbid by some other bidder (i.e., claiming that there exists a bidder with a higher type), or selecting bidder $j$ to get a strictly higher payment and claiming to bidder $i$ that he is outbid by some other bidder and claiming to bidder $j$ that he has the highest type.  

Now, we claim that for all $i, j$, $s\in[0, 1]$, and $\theta_0$, 
\[\max\big\{b_i(s, \theta_0), \theta_0\big\} = \max\big\{b_j(s, \theta_0), \theta_0\big\}\,.\]
Indeed, suppose for contradiction that this is not the case. Then, without loss of generality, we have 
\[ \max\big\{\theta_0, b_i(s, \theta_0)\big\} > \max\big\{\theta_0, b_j(s, \theta_0)\big\}\,,\]
and hence 
\[  b_i(s, \theta_0) > \max\big\{\theta_0, b_j(s, \theta_0)\big\}\,.\]
By construction of $b_i$, which equals $0$ at any $s < r(\theta_0)$, this implies that $s \geq r(\theta_0)$. Now take the type profile $\theta_N=(s,s,0,\dots,0)$, where the first two indices are $i$ and $j$ respectively, we have that under that type profile,
\[b_i(\theta_i, \theta_0) > \max\Big\{\theta_0, \max_{k\neq i} b_k(\theta_k, \theta_0)\Big\}\,.\]
Now by continuity, take the type profile $\theta'_N=(s, s+\varepsilon, 0,\dots,0)$ for some small $\varepsilon > 0$, where $\theta_j = s + \varepsilon$, we must also have 
\[b_i(\theta'_i, \theta_0) > \max\Big\{\theta_0, \max_{k\neq i} b_k(\theta'_k, \theta_0)\Big\}\,.\]
But at the type profile $\theta'_N$, the object must be allocated to bidder $j$ by optimality, which contradicts credibility by our previous observation. 

Moreover, note that for all $s < r(\theta_0)$, we have 
\[b_i(s, \theta_0) = 0\]
and for all $s \geq r(\theta_0)$, we must have
\[b_i(s, \theta_0) \geq \theta_0\,,\]
in order for the allocation to match the Myersonian allocation, given credibility. Now, for any $i, j$, and any $\theta_0$, we then have that for all $s < r(\theta_0)$, 
\[b_i(s, \theta_0) = b_j(s, \theta_0) = 0\]
and for all $s \geq r(\theta_0)$, 
\[b_i(s, \theta_0) = \max\big\{b_i(s, \theta_0), \theta_0\big\} =  \max\big\{b_j(s, \theta_0), \theta_0\big\} = b_j(s, \theta_0)\,.\]
Therefore, we have for all $i, j$, 
\[b_i(s, \theta_0) = b_j(s, \theta_0)\,.\]
Thus, there exists some function $b(\theta_i, \theta_0)$ such that the induced payment rule for any agent $i$ satisfies 
\[\tilde{t}_i(\theta_N, \theta_0) = b(\theta_i, \theta_0) \1_{\tilde{y}(\theta_N, \theta_0) = i}\,.\]
Thus the induced payment rule is symmetric across agents for every cost type $\theta_0$.

\subsection{Proof of \texorpdfstring{\Cref{thm:char}}{}}
Let $(M, S_N)$ be a symmetric, winner-paying, credible, and static mechanism. 

\textbf{Step 1.} We first make a sequence of observations about $(M, S_N)$. 

First, by \Cref{lem:psb}, $(M, S_N)$ must be a pay-as-bid auction. Without loss of generality, we also define $b_i(\theta_i, \theta_0) = 0$ for any $i$ and any $(\theta_i, \theta_0)$ such that $\P(\tilde{y}(\theta_i, \theta_{-i}, \theta_0, \varepsilon) = i) = 0$. 

Second, we claim that, under $(M, S_N)$, the bidding function $b_i(\theta_i, \theta_0)$ must be symmetric, i.e., there exists $b(\theta_i, \theta_0)$ such that $b_i(\theta_i, \theta_0) = b(\theta_i, \theta_0)$ for all $i$. By symmetry of $(M, S_N)$, the interim allocation probability
\[Q_i(\theta_i, \theta_0) = \E_{\theta_{-i}}\big[q_i(\theta_i, \theta_{-i}, \theta_0)\big]\]
must be symmetric across bidders, i.e., there exists some $Q(\theta_i, \theta_0)$ such that $Q_i(\theta_i, \theta_0) = Q(\theta_i, \theta_0)$ for all $i$.
But, since $(M, S_N)$ is pay-as-bid and winner-paying, and has symmetric interim payment rules across bidders conditional on $\theta_0$, this implies that for all $i$ and $j$
\[Q_i(s, \theta_0) b_i(s,  \theta_0)= Q_j(s, \theta_0) b_j(s, \theta_0) \]
almost everywhere. Moreover, recall that we set $b_i(\theta_i, \theta_0) = 0$ whenever $Q_i(\theta_i, \theta_0) = 0$. Hence, the bidding function $b_i(\theta_i, \theta_0)$ is symmetric across bidders almost everywhere.  

Third, we claim that, under $(M, S_N)$, any winner must have a maximal bid that exceeds the seller's private cost, i.e., for all $\theta_0$ and all $i$, if $\tilde{y}(\theta_N, \theta_0, \varepsilon) = i$, then 
\[b(\theta_i, \theta_0) \geq \max\Big\{\theta_0, \max_{j \neq i} b(\theta_j, \theta_0)\Big\}\,,\]
almost everywhere. Suppose for contradiction that this is not the case. Then there exists some $\theta_0$ and some bidder $i$ such that the set 
\[\mathcal{Q}:= \Big\{(\theta_N, \varepsilon): \tilde{y}(\theta_N, \theta_0, \varepsilon) = i, b(\theta_i, \theta_0) < \max\Big\{\theta_0, \max_{j \neq i} b(\theta_j, \theta_0)\Big\}\Big\}\]
has positive measure. But consider the following deviation by the auctioneer of type $\theta_0$: 
\begin{itemize}
    \item Run the game $M(\theta_0)$.
    \item If $\tilde{y}(\theta_N, \theta_0, \varepsilon) = i$ and $b(\theta_i, \theta_0) < \theta_0$, keep the object. 
    \item Otherwise, if $\tilde{y}(\theta_N, \theta_0, \varepsilon) = i$ and $b(\theta_i, \theta_0) <  \max_{j \neq i}  b(\theta_j, \theta_0)$, allocate the object to bidder $j$ with the highest bid $b(\theta_j, \theta_0)$, instead of bidder $i$, and charge  bidder $j$ a payment $b(\theta_j, \theta_0)$.
\end{itemize}
This is clearly a profitable deviation. We argue that this is also safe. By the symmetry of the induced allocation rule, for any bidder $i$ and any $(\theta_i, \theta_0)$, there exist a type profile $\theta'_{-i}$ and realization $\varepsilon'$ such that bidder $i$ loses and pays zero. Moreover, for any bidder $j$ and any $(\theta_j, \theta_0)$ such that $b(\theta_j, \theta_0) > b(\theta_i, \theta_0) \geq 0$, we have $\P(\tilde{y}(\theta_{j}, \theta_{-j}, \theta_0, \varepsilon) = j) > 0$ by the construction of $b$. Hence, there exists some type profile $\theta'_{-j}$ and some realization $\varepsilon'$ such that bidder $j$ wins and pays $b(\theta_j, \theta_0)$. Thus, the deviation is safe. But then $(M, S_N)$ cannot be credible. A contradiction. 

Fourth, we claim that, under $(M, S_N)$, if the maximal bid exceeds the seller's cost, then the seller must allocate the object to some bidder, i.e., if $\max_i b(\theta_i, \theta_0) > \theta_0$, then \[\tilde{y}(\theta_N, \theta_0, \varepsilon) \neq 0\,,\]
almost everywhere. Suppose for contradiction that this is not the case. Then there exists some $\theta_0$ such that 
\[\mathcal{Q}':= \Big\{(\theta_N, \varepsilon): \tilde{y}(\theta_N, \theta_0, \varepsilon) = 0, \max_i b(\theta_i, \theta_0) > \theta_0 \Big\}\]
has positive measure. But consider the following deviation by the auctioneer of type $\theta_0$: 
\begin{itemize}
    \item Run the game $M(\theta_0)$.
    \item If $\tilde{y}(\theta_N, \theta_0, \varepsilon) = 0$ and $\max_i b(\theta_i, \theta_0) > \theta_0 $, allocate the object to bidder $i$ with the highest bid $b(\theta_i, \theta_0)$, and charge bidder $i$ a payment $b(\theta_i, \theta_0)$. 
\end{itemize}
This is clearly a profitable deviation. It is also safe by the same argument in the previous observation. But then $(M, S_N)$ cannot be credible. A contradiction.

\textbf{Step 2.} Let
\[B := \Big\{b(\theta_i, 0): \theta_i \in \Theta_i, \, Q(\theta_i, 0) > 0\Big\}\,.\]
We show that $(M, S_N)$ must be outcome-equivalent to a first-price auction with a walk-away option and the public bid space $B$. 

The proof of this claim involves three substeps. 

\textbf{Step 2(i).} First, for each $\theta_0$, let $G(b;\theta_0)$ denote the CDF of the random variable $b(\theta_i, \theta_0)$. We claim that 
\[G(s; \theta_0) = G(s; 0)\]
for all $s \in [\theta_0, 1]$. To prove it, define 
\[\Phi(s; \theta_0) = \int_{s}^1 (b - s) \d G(b; \theta_0)\,.\]
Fix any $\hat{\theta}_0 \in [0, 1]$. Let $G_0(\,\cdot\,) := G(\,\cdot\,; 0)$ and $\hat{G}(\,\cdot\,) := G(\,\cdot\,; \hat{\theta}_0)$. We first show that 
\[\int_s^1  (b - s) \d G_0(b) \geq \int_s^1  (b - s) \d \hat{G}(b) \]
for a $G_0$-measure-one set of $s$. Suppose for contradiction that this is not the case. Then there exists a $G_0$-positive-measure set $S\ni s$ such that 
\[\int_s^1  (b - s) \d G_0(b) < \int_s^1  (b - s) \d \hat{G}(b)\,.\]
Now, consider the following deviation by the auctioneer of type $0$: 
\begin{itemize}
    \item  Run the game $M(0)$.
    \item  If the maximal bid of the first $|N| - 1$ bidder, $\max_{j < |N|} b(\theta_j, 0)$, is in the set $S$, then give the information set $\mathcal{I}_{|N|}(\hat{\theta}_0)$ to the last bidder. 
\end{itemize}
Since $S$ is a $G_0$-positive-measure set, by symmetry and independence, the above event is a $G_0$-positive-measure set. For the last bidder, upon receiving the information set $\mathcal{I}_{|N|}(\hat{\theta}_0)$, his payment-conditional-on-winning is $b(\theta_{|N|}, \hat{\theta}_0)$. For any maximal bid by the first $|N|-1$ bidders $s$, if playing by the book, by \textbf{Step 1}, we know that the auctioneer of type $0$ gets 
\[s + \int (b - s) \1_{b \geq s} \d G_0(b)\,.\]
On the other hand, if following this deviation, by \textbf{Step 1}, the auctioneer gets 
\[s + \int (b - s) \1_{b \geq s} \d \hat{G}(b)\,,\]
which is strictly higher whenever $s \in \mathcal{S}$ (which happens with a positive probability). Therefore, type-$0$ auctioneer has a profitable safe deviation, contradicting the credibility of $(M, S_N)$. 

Now, note that 
\[\Phi(s; \theta_0) = \int_s^1  (b - s) \d G(b; \theta_0) = \int_s^1  (1 - G(b; \theta_0)) \d b\]
is a non-increasing and convex function in $s$, for any $\theta_0$. By continuity, we have that 
\[\Phi(s; 0) \geq \Phi(s; \hat{\theta}_0)\]
for all $s \in \supp(G_0)$. We claim that we also have the same inequality for all $s \geq \min \{\supp(G_0)\}$. Suppose for contradiction that there exists some $s \geq \min \{ \supp(G_0) \}$ such that 
\[\Phi(s; 0) < \Phi(s; \hat{\theta}_0)\,.\]
Then $s\not\in \supp(G_0)$, and hence $s$ must be in an open interval $(s_1, s_2) \subset ([0, 1] \backslash \supp(G_0))$ such that $s_1 \in \supp(G_0)$ and $s_2\in \supp(G_0)$. In particular, $G_0$ is constant on the open interval, and hence 
\[\Phi(s; 0)  =  \frac{s_2 - s}{s_2 - s_1} \Phi(s_1; 0) +  \frac{s - s_1}{s_2 - s_1}\Phi(s_2; 0)\,.\]
We also know that 
\[\Phi(s_1; 0) \geq \Phi(s_1; \hat{\theta}_0)\,, \qquad  \Phi(s_2; 0) \geq \Phi(s_2; \hat{\theta}_0)\,. \]
But then 
\[\Phi(s; \hat{\theta}_0) > \Phi(s; 0) =  \frac{s_2 - s}{s_2 - s_1} \Phi(s_1; 0) +  \frac{s - s_1}{s_2 - s_1}\Phi(s_2; 0) \geq \frac{s_2 - s}{s_2 - s_1} \Phi(s_1; \hat{\theta}_0) +  \frac{s - s_1}{s_2 - s_1}\Phi(s_2; \hat{\theta}_0)\]
contradicting the convexity of $\Phi(\,\cdot\,; \hat{\theta}_0)$. 

Similarly, we also claim that for all $s \geq \max\{\min\{\supp(\hat{G})\}, \hat{\theta}_0\}$, 
\[\Phi(s; \hat{\theta}_0) \geq \Phi(s; 0)\,.\]
The proof is exactly symmetric to the above argument if $\min\{\supp(\hat{G})\} \geq \hat{\theta}_0$. Now, suppose that $\min\{\supp(\hat{G})\} < \hat{\theta}_0$. Then, there exists a $\hat{G}$-positive-measure event under which the maximal bid from the first $|N|-1$ bidders is strictly less than $\hat{\theta}_0$. If that happens, the auctioneer of type $\hat{\theta}_0$ can deviate to give the last bidder information set $\mathcal{I}_{|N|}(0)$. In order for this deviation not to be profitable, we must have 
\[\Phi(\hat{\theta}_0; \hat{\theta}_0) \geq \Phi(\hat{\theta}_0; 0)\,.\]
For all $s > \hat{\theta}_0$ such that $s \in \supp(\hat{G})$, the same deviation as before yields the desired inequality. For all $s > \hat{\theta}_0$ such that $s \not\in \supp(\hat{G})$, the same convexity argument as above would also yield the desired inequality (since we also know that the inequality holds at $\hat{\theta}_0$). Therefore, we have 
\[\Phi(s; \hat{\theta}_0) \geq \Phi(s; 0)\]
for all $s \geq \max\{\min\{\supp(\hat{G})\}, \hat{\theta}_0\}$. 

Combining these two sets of inequalities together, we have 
\[\Phi(s; \hat{\theta}_0) = \Phi(s; 0) \]
for all 
\[s \geq \max\big\{\min\{\supp(G_0)\}, \min\{\supp(\hat{G})\}, \hat{\theta}_0 \big\}\,.\]
Denote $m_0 = \min\{\supp(G_0)\}$ and $\hat{m} = \min\{\supp(\hat{G})\}$. Consider first the case $\hat{\theta}_0 < \max\{m_0, \hat{m}\}$. Then, we have  
\begin{equation}
    \Phi(s; \hat{\theta}_0) = \Phi(s; 0)  \label{eq:equal}
\end{equation}
for all $s \geq \max\{m_0, \hat{m}\}$. We claim that this implies  
\[\Phi(s; \hat{\theta}_0) = \Phi(s; 0)\]
for all $s \in [\hat{\theta}_0, 1]$. We consider two subcases. First, suppose that $m_0 \geq \hat{m}$. Then, by \eqref{eq:equal}, $G_0$ must first-order stochastically dominate $\hat{G}$. For any $s \geq \max\{\hat{\theta}_0, \hat{m}\}$, we have 
\[\Phi(s; \hat{\theta}_0) \geq \Phi(s; 0)\,,\]
but the FOSD implies that 
\[\Phi(s; \hat{\theta}_0) \leq \Phi(s; 0)\,.\]
Thus, $\Phi(s; \hat{\theta}_0) = \Phi(s; 0)$. Now, for any $s \in [ \hat{\theta}_0, \max\{\hat{\theta}_0, \hat{m}\} )$, by construction, 
\[G_0(s) = \hat{G}(s) = 0\,.\]
It follows immediately that $\Phi(s; \hat{\theta}_0) = \Phi(s; 0)$ for all $s \in [\hat{\theta}_0, 1]$. Now suppose that $\hat{m} > m_0$. Then, by \eqref{eq:equal}, $\hat{G}$ must first-order stochastically dominate $G_0$. The symmetric argument of the previous case implies that $\Phi(s; \hat{\theta}_0) = \Phi(s; 0)$ for all $s \in [\hat{\theta}_0, 1]$. As a consequence, we immediately have that 
\[G(s; \hat{\theta}_0) = G(s; 0)\]
for all $s \in [\hat{\theta}_0, 1]$ by taking the derivative of $\Phi$ with respect to $s$ and using that $G$ is right-continuous in $s$. 

Now, consider the case  $\hat{\theta}_0 \geq \max\{m_0, \hat{m}\}$. The same argument immediately implies that 
\[G(s; \hat{\theta}_0) = G(s; 0)\]
for all $s \in [\hat{\theta}_0, 1]$. 

Since $\hat{\theta}_0$ is arbitrary, this proves the claim. Before we move on, we make one more observation: We claim that for all $\hat{\theta}_0 > 0$, we have  
\[ G(\hat{\theta}^{-}_0; \hat{\theta}_0) \geq G(\hat{\theta}^{-}_0; 0)\,,\]
where $\hat{\theta}^{-}_0$ denotes the left limit of $\hat{\theta}_0$. Indeed, suppose not. Then, there exists an open interval $(s', \hat{\theta}_0)$ such that for all $s \in (s', \hat{\theta}_0)$ we have 
\[ G(s; \hat{\theta}_0) < G(s; 0)\,,\]
which then, combined with our previous claim, implies that 
\[\int^1_s (1 - G(b; \hat{\theta}_0))\d b >\int^1_s (1 - G(b; 0))\d b \,.\]
However, since $G(s; 0) > 0$, we have $s \geq \min\{\supp(G_0)\}$, and hence by our previous argument, 
\[\int^1_s (1 - G(b; \hat{\theta}_0))\d b \leq \int^1_s (1 - G(b; 0))\d b \,,\]
a contradiction.

\textbf{Step 2(ii).} By BIC and credibility, as shown in the proof of \Cref{lem:ex-post}, $b(\theta_i, \theta_0)$ is non-decreasing in $\theta_i$ for all $\theta_0$. By \textbf{Step 2(i)}, for any $\theta_0$, we have
\[\max\big\{b(\theta_i, 0), \theta_0\big\} \eqid \max\big\{b(\theta_i, \theta_0), \theta_0\big\}\,.\]
Since both $\max\big\{b(\theta_i, 0), \theta_0\big\}$ and $\max\big\{b(\theta_i, \theta_0), \theta_0\big\}$ are non-decreasing in $\theta_i$, the above implies that for any $\theta_0$, we have 
\[\max\big\{b(\theta_i, 0), \theta_0\big\} = \max\big\{b(\theta_i, \theta_0), \theta_0\big\}\]
almost everywhere in $\Theta_i$. Moreover, by \textbf{Step 2(i)}, for all $\theta_0 > 0$, we have  
\[G(\theta^{-}_0; \theta_0) \geq G(\theta^{-}_0; 0)\,,\]
which implies that 
\[\P_{\theta_i}\big(b(\theta_i, 0) = \theta_0\big) \geq \P_{\theta_i}\big(b(\theta_i, \theta_0) = \theta_0\big)\,.\]
Since $b(\,\cdot\,, 0)$ and $b(\,\cdot\,, \theta_0)$ are both non-decreasing, this then implies that $b(\theta_i, 0) = \theta_0$ for almost all $\theta_i$ with $b(\theta_i, \theta_0) = \theta_0$. 

Now, we claim that the auctioneer can replicate the outcomes by using another protocol $(M', S'_N)$ that is a first-price auction with a walk-away option and a public bid space $B := \Big\{b(\theta_i, 0): \theta_i \in \Theta_i, \, Q(\theta_i, 0) > 0\Big\}$ (with the strategy profile given by $\big\{b(\theta_i, 0) \big\}_{i \in N}$).  

By the previous argument, there exists a measure-$1$ set of type profiles on which we have for all $i \in N$
\begin{equation}
    \max\big\{b(\theta_i, 0), \theta_0\big\} = \max\big\{b(\theta_i, \theta_0), \theta_0\big\}\,. \label{eq:key}
\end{equation}
Fix this set of type profiles. By \textbf{Step 1}, if
\[\max_i\big\{b(\theta_i, \theta_0)\big\} > \theta_0\]
then the object must be allocated to a maximal bidder for whom we have $b(\theta_i, \theta_0) = b(\theta_i, 0)$ by \eqref{eq:key}. Also by \eqref{eq:key}, the rest of the bidders either bid at some $b(\theta_j, \theta_0) = b(\theta_j, 0)$ or bid weakly below $\theta_0$. In the FPA that we are constructing, let the auctioneer  follow the same tie-breaking rule as in $(M, S_N)$ (such tie-breaking may require cheap-talk messages from the bidders to report their types). It follows that, at this type profile, our construction would yield the same ex-post allocation and the same ex-post payment. 

Similarly, by \textbf{Step 1}, if
\[\max_i\big\{b(\theta_i, \theta_0)\big\} < \theta_0\]
then the object must be kept by the auctioneer in $(M, S_N)$. At any such type profile, we must have that $b(\theta_i, 0) \leq \theta_0$ for all bidders $i$ by \eqref{eq:key}. Thus, in the FPA that we are constructing, let the auctioneer always keep the object at such a type profile, resulting in the same ex-post allocation and payment. 

Finally, if
\[\max_i\big\{b(\theta_i, \theta_0)\big\} = \theta_0\,,\]
then by \textbf{Step 2(i)} and the previous observation, it must be that 
\[\max_i\big\{b(\theta_i, 0)\big\} = \theta_0\,.\]
Thus, in the FPA that we are constructing, we may let the auctioneer follow the same ex-post allocation and payment rule as in the $(M, S_N)$. 

Now, we claim that the strategy profile $\big\{b(\theta_i, 0) \big\}_{i \in N}$ continues to form an equilibrium. Fix any bidder $i$. Note that if the auctioneer can send a cheap-talk message $I_i(\theta_0)$ to each bidder $i$ while providing the bid space $B$, then bidder $i$ would have exactly the same information in $(M', S'_N)$ as in $(M, S_N)$. In such a case, for each cheap-talk message, fixing the opponent's strategies $\big\{b(\theta_{j}, 0)\big\}_{j \neq i}$, bidder $i$'s any feasible strategy of misreporting $\theta_i$ would result in the same ex-post outcome as in $(M, S_N)$, almost everywhere. Moreover, bidder $i$ has the same belief about $(\theta_0, \theta_{-i})$, and hence following the strategy $b(\theta_i, 0)$ by truthfully reporting must be a best response for type $\theta_i$. But then, since this strategy does not depend on the cheap-talk message $I_i(\theta_0)$, it must also maximize bidder $i$'s expected payoff even if bidder $i$ does not observe the cheap-talk message $I_i(\theta_0)$. Hence, $(M', S'_N)$ would also be BIC. 

\textbf{Step 2(iii).} Finally, we complete the characterization by noting that the event 
\[\max_{i\in N} b(\theta_i, 0) = \theta_0\]
can only happen with zero probability, given the independence of $\theta_0$ and $\theta_N$. Therefore, in $(M', S'_N)$, we may let the auctioneer walk away if and only if 
\[\max_{i\in N} b(\theta_i, 0) \leq \theta_0\,,\]
while keeping the resulting outcomes to be equivalent almost everywhere. 

\section{Ascending Auctions}\label{app:ascending}

In this appendix, for completeness, we give the formal definition of an ascending auction following \citet{akbarpour2020credible}. We then show that optimal ascending auctions are strongly credible.

\subsection{Definition of Ascending Auctions}

As in \citet{akbarpour2020credible}, we assume that the type space is discrete to avoid modeling continuous-time games. Let $\Theta_i := \{\theta^1_i,\dots, \theta^K_i\}$, where $\theta^1_i = 0$ and $\theta^{k+1}_i - \theta^k_i > 0$, for all $k$ and all $i = 0, 1, \dots, N$ (including the auctioneer's types).

We say that $(M, S_N)$ is an \textit{\textbf{ascending auction}} if for every $\theta_0$, $\big( M(\theta_0), S_N\mid_{\mathcal{I}_N(\theta_0)} \big)$ satisfies Definition 14 in \citet{akbarpour2020credible}.

\subsection{Strong Credibility of Ascending Auctions}
In this section, we provide a generalization of Lemma 3 in \citet{akbarpour2020credible}, which will be used to prove the strong credibility of optimal ascending auctions. 

\begin{lemma}\label{lem:bestresponse}
Let $(M, S_N)$ be an ascending auction. For every bidder $i$, if $(\hat{S}_0, \hat{S}_j)$ is a safe joint deviation such that $j \neq i$, then $S_i$ is an ex post best response to $(\hat{S}_0, \hat{S}_j, S_{- \{ j,i\}})$ for all $\theta_0$ and $\theta_{-i}$. 
\end{lemma}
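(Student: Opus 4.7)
The plan is to generalize the proof of Lemma 3 of \citet{akbarpour2020credible} to accommodate the joint deviation. The central observation is that in any ascending auction, the straightforward strategy $S_i$---continue raising the bid whenever the current price is strictly below $\theta_i$, and drop out as soon as it reaches or exceeds $\theta_i$---is \emph{ex post} weakly dominant for bidder $i$. Because ex post dominance does not rely on any particular belief about opponents' strategies, it is immaterial whether bidder $j$'s actions are generated by the original $S_j$ or by a deviation $\hat{S}_j$ chosen jointly with the auctioneer; all that matters from bidder $i$'s vantage point is the sequence of prices he is presented with.

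First I would fix $i$, the safe joint deviation $(\hat{S}_0, \hat{S}_j)$, and a realized type profile $(\theta_0, \theta_{-i})$. Take any alternative strategy $\sigma_i$ and consider any history reached under $(\hat{S}_0, \hat{S}_j, S_{-\{i,j\}}, \sigma_i)$ at which bidder $i$ is called upon to act. By safety of $(\hat{S}_0, \hat{S}_j)$ with respect to bidder $i$, the information set occupied by bidder $i$ must coincide with a legitimate information set of $M(\hat{\theta}_0)$ for some auctioneer type $\hat{\theta}_0$ and some opponent-type profile $\hat{\theta}_{-i}$. In particular, by the definition of an ascending auction, the current price $p$ at which bidder $i$ is asked to raise or to drop out is well defined, and the available actions are exactly those prescribed by the ascending-auction rules.

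I would then perform a standard node-by-node dominance argument. If $p \geq \theta_i$, then dropping out (as $S_i$ prescribes) guarantees zero surplus, whereas any continuation risks winning at a price at least $p \geq \theta_i$, which yields non-positive surplus. If $p < \theta_i$, then raising the bid (as $S_i$ prescribes) cannot hurt bidder $i$: either he is later asked to raise again, or he ultimately wins at a price no greater than $\theta_i$, yielding non-negative surplus; the alternative of quitting immediately forfeits any such gain. A one-deviation-principle-style argument then stitches these local comparisons into the global claim that $S_i$ is an ex post best response at every $(\theta_0, \theta_{-i})$.

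The main obstacle is the bookkeeping required to ensure that the sequence of prices bidder $i$ encounters under the deviation is well-formed---monotonically increasing and always presenting a binary ``raise or drop out'' decision. This is exactly what safety delivers: the auctioneer may only send bidder $i$ messages that could have arisen under some play-by-the-book execution of some $M(\hat{\theta}_0)$, and by the definition of an ascending auction each such message is a legitimate price prompt. The collusion with bidder $j$ affects only the path of play, not the structure of the decision problem bidder $i$ faces at each of his nodes, so the ex post dominance of $S_i$ carries through unchanged.
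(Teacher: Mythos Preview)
Your proposal is correct and follows essentially the same approach as the paper: both generalize Lemma~3 of \citet{akbarpour2020credible} by observing that safety guarantees bidder $i$ only encounters legitimate ascending-auction information sets, so the straightforward strategy remains ex post dominant regardless of whether the opposing play is generated by $S_j$ or by a joint deviation $\hat{S}_j$. The paper organizes the argument as a first-point-of-departure case split (quit versus accept, invoking threshold pricing, the orderly property, and property~5b explicitly), whereas you phrase it as a node-by-node dominance argument stitched together via the one-deviation principle, but the substance is identical.
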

\begin{proof}[Proof of \Cref{lem:bestresponse}]
Let $(\hat{S}_0, \hat{S}_j)$ be a safe joint deviation. Take any type $\theta_i$. We claim that any deviating strategy $\hat{S}_i(\theta_i)$ cannot yield strictly higher payoff for type $\theta_i$. 

Suppose that $S_i(\theta_i)$ and deviating
strategy $\hat{S}_i(\theta_i)$ choose different actions for the first time after receiving message $I_i$. There
are three cases to consider; we will show that, in each case, $\hat{S}_i(\theta_i)$ is not a profitable
deviation for every possible realization of $\theta_0$ and $\theta_{-i}$. 
\newline
\textbf{Case 1:} Suppose that at $I_i$, strategy $\hat{S}_i$ chooses the quit action, thus receiving zero utility.  Since $(\hat{S}_0, \hat{S}_j)$ is safe and the ascending auction has threshold pricing, the strategy $S_i$ must obtain weakly positive utility by accepting, so the deviation is unprofitable. 
\newline
\textbf{Case 2:} Suppose that at $I_i$, strategy $S_i$ quits while deviation $\hat{S}_i$ accepts. Following the same logic as in the proof of Lemma 3 of \citet{akbarpour2020credible}, we can find plausible explanations $\hat{\theta}_0, \hat{\theta}_{-i}$ and an agent type $\hat{\theta}_i$ such that 
\[o_i(\hat{S}_0, \hat{S}_j, \hat{S}_i, S_{-\{j,i\}}, \theta_0, \theta_N) = o_i(S_0, S_N, \hat{\theta}_0, (\hat{\theta}_i, \hat{\theta}_{-i}))\,.\]
If $\tilde{y}(\hat{\theta}_i, \hat{\theta}_0, \hat{\theta}_{-i}) \neq i$, then the deviation is clearly unprofitable by threshold pricing. Now suppose $\tilde{y}(\hat{\theta}_i, \hat{\theta}_0, \hat{\theta}_{-i}) = i$. However, note that $\tilde{y}(\theta_i, \hat{\theta}_0, \hat{\theta}_{-i}) \neq i$ since $S_i(\theta_i)$ can also reach $I_i$ and specifies agent $i$ to quit. But then the misreport $\hat{\theta}_i$ induces the agent to win the object and pay $\tilde{t}_i(\hat{\theta}_i, \hat{\theta}_0, \hat{\theta}_{-i}) > \theta_i$ by threshold pricing and the orderly property of the ascending auction, so the deviation is unprofitable. 
\newline
\textbf{Case 3:} Suppose that at $I_i$, the two strategies decide to accept under two different actions. Then, property 5b of the ascending auction guarantees that the two strategies generate the same utility.
\end{proof}

The following result strengthens \Cref{thm:english} to the notion of strong credibility. 

\begin{theorem}\label{thm:strongenglish}
There exists an optimal, strongly credible auction. In particular, an optimal, ascending auction is strongly credible. 
\end{theorem}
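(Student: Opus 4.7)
The plan is to build on Theorem~\ref{thm:english} and on Lemma~\ref{lem:bestresponse} (the joint-deviation extension of Lemma 3 of \citet{akbarpour2020credible}). Credibility of the optimal ascending auction follows directly from Theorem~\ref{thm:english}, so it suffices to rule out mutually beneficial safe joint deviations. Suppose for contradiction that $(\hat{S}_0,\hat{S}_j)$ is a mutually beneficial safe joint deviation for seller type $\theta^*_0$ and bidder $j$. I will construct a new BIC protocol whose expected revenue strictly exceeds that of the original optimal ascending auction, which contradicts optimality.

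Define the seller's modified strategy $\bar{S}_0$ by $\bar{S}_0(\theta^*_0)=\hat{S}_0(\theta^*_0)$ and $\bar{S}_0(\theta_0)=S_0(\theta_0)$ for $\theta_0\neq\theta^*_0$. Define bidder $j$'s modified interim strategy $\bar{S}_j(\theta_j)$ to play $\hat{S}_j(\theta_j)$ on the information sets reached under $\hat{S}_0(\theta^*_0)$ (the ``deviation branch,'' which bidder $j$ can detect because they are the party being offered the side deal in the very definition of safety) and to play $S_j(\theta_j)$ on the information sets reached under $S_0(\theta_0)$ for $\theta_0\neq\theta^*_0$. Keep $S_i$ for every $i\neq j$. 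I would then verify BIC branch by branch. On the deviation branch, Lemma~\ref{lem:bestresponse} gives $S_i$ as an ex post best response for every $i\neq j$, and the IC clause in the definition of mutual benefit gives $\hat{S}_j(\theta_j)$ as a best response against $(\hat{S}_0(\theta^*_0),S_{-j})$; on the non-deviation branch, the seller runs the original optimal ascending auction, in which $S_N$ is a weakly dominant equilibrium. For the revenue comparison, equation (\ref{eq:strict}) delivers a strict gain at $\theta^*_0$, while revenues at other $\theta_0$ are unchanged, so ex ante expected profit strictly increases, contradicting optimality of $(M,S_N)$.

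The main obstacle is arguing that bidder $j$'s strategy is well-defined on information sets that both branches could generate, which occurs whenever $\hat{S}_0(\theta^*_0)$ mimics (at bidder $j$'s end) some original $S_0(\theta'_0)$. At such shared information sets bidder $j$'s posterior mixes the two branches and a single action must be selected. The key tool is condition (\ref{eq:sup}): since the deviation payoff for bidder $j$ under $\hat{S}_0(\theta^*_0)$ weakly dominates their best achievable payoff under \emph{any} seller type in the original mechanism, $\hat{S}_j$ is a best response even under arbitrary Bayesian mixing between deviation and non-deviation branches. This closes the BIC argument for bidder $j$ and, via the revelation-principle observation of Section~\ref{sec:model}, yields a credible BIC mechanism with strictly higher expected profit than $(M,S_N)$, completing the contradiction.
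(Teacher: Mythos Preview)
Your overall architecture is the same as the paper's: assume a mutually beneficial safe joint deviation $(\hat S_0,\hat S_j)$ at seller type $\theta_0^*$, modify the seller's strategy only at $\theta_0^*$, invoke Lemma~\ref{lem:bestresponse} for bidders $i\neq j$ and the IC clause in the definition of mutual benefit for bidder $j$, then use \eqref{eq:strict} together with discreteness of $\Theta_0$ to contradict optimality. So the core idea is right.

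Where you diverge from the paper is in the treatment of bidder $j$'s information, and that is where a genuine gap appears. You try to define $\bar S_j$ branch by branch and then patch the shared-information-set problem using \eqref{eq:sup}. That patch does not work: \eqref{eq:sup} is an \emph{ex ante} payoff comparison between playing $\hat S_j$ against $\hat S_0(\theta_0^*)$ and playing $S_j$ against $S_0(\theta'_0)$. It does not imply that $\hat S_j$ is a best response at an information set where bidder $j$'s posterior mixes the two branches. Concretely, at a shared information set one needs $\lambda A+(1-\lambda)B\ge\lambda C+(1-\lambda)D$, where $A,C$ are the continuation payoffs from $\hat S_j,S_j$ on the deviation branch and $B,D$ on the non-deviation branch; the IC clause gives $A\ge C$ and weak dominance of the threshold strategy gives $D\ge B$, but neither \eqref{eq:sup} nor these two inequalities pin down the required comparison. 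Relatedly, the parenthetical ``which bidder $j$ can detect because they are the party being offered the side deal in the very definition of safety'' is not a valid inference: safety of a joint deviation is a condition only on bidders $i\neq j$ and places no restriction on what $j$ observes.

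The paper sidesteps all of this with a one-line move that you should adopt: construct the comparison protocol so that the auctioneer's type is \emph{publicly observable} (equivalently, include in bidder $j$'s first message the label of the branch). This is permissible because you are building a BIC comparison mechanism from scratch, not operating inside the original messaging game. With $\theta_0$ public there are no shared information sets for bidder $j$; the IC clause gives $\hat S_j$ as a best response when $\theta_0=\theta_0^*$, weak dominance of the threshold strategy gives $S_j$ when $\theta_0\neq\theta_0^*$, and Lemma~\ref{lem:bestresponse} (which delivers ex post best responses for all $\theta_0,\theta_{-i}$) handles bidders $i\neq j$ on both branches. Note that the paper's proof never uses \eqref{eq:sup} at all.
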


\begin{proof}[Proof of \Cref{thm:strongenglish}]
Suppose that $(M, S_N)$ is an optimal, ascending auction. By \Cref{thm:english}, we know that $(M, S_N)$ is credible. Now, to prove strong credibility, suppose for contradiction that there exists a safe joint deviation  $(S', \hat{S}_j)$ that is mutually beneficial between the auctioneer of type $\theta_0^*$ and bidder $j$.

Consider the following strategy by the auctioneer. If the auctioneer's type is $\theta_0 \neq \theta_0^*$, run the optimal ascending auction as in $S_0$. If the auctioneer's type is $\theta_0^*$, deviate to $S'(\theta_0^*)$ and ask bidder $j$ to play according to $\hat{S}_j$. By the definition of mutually beneficial deviation, $\hat{S}_j$ must be optimal for bidder $j$ to play against $(S'(\theta_0^*), S_{-j})$. Moreover, by \Cref{lem:bestresponse}, we also know that $S_i$ is a best reply to  $(S', \hat{S}_j, S_{-\{j, i\}})$ even after knowing the auctioneer's type. Therefore, we know that 
\[\big(S'(\theta_0^*),  \hat{S}_j,  S_{-j} \big)\]
induces a BIC mechanism (where everyone knows the auctioneer's type is $\theta_0^*$) that yields the auctioneer a payoff strictly higher than what the auctioneer can obtain following $S_0(\theta^*_0)$. But then if the auctioneer follows the above strategy, then the auctioneer can guarantee an expected payoff in a BIC mechanism (where the auctioneer's type is publicly observable) that is strictly higher than what she could obtain in the optimal, ascending auction (recall that we have discrete types), contradicting that the ascending auction is optimal. \end{proof}
\end{document}